\newcommand{\arccot}{\mathrm{arccot}\,}
\def \d{{\textrm d}}
\newtheorem{thm}{Theorem}
\newtheorem{lemma}{Lemma}
\begin{document}

\captionsetup[figure]{labelfont={bf},labelformat={default},labelsep=period,name={Fig. }}

\begin{frontmatter}



\title{A generalized geometric mechanics theory for multi-curve-fold origami: vertex constrained universal configurations}


\author[inst1]{Zhixuan Wen}

\affiliation[inst1]{organization={State Key Laboratory for Turbulence and Complex Systems, Department of Mechanics and Engineering Science, BIC-ESAT, College of Engineering, Peking University},
            city={Beijing},
            postcode={100087}, 
            country={China}}

\author[inst1]{Pengyu Lv}
\author[inst3]{Fan Feng \corref{cor1}}
\author[inst1,inst2]{Huiling Duan \corref{cor1}}

\affiliation[inst2]{organization={HEDPS, CAPT and IFSA, Collaborative Innovation Center of MoE, Peking University},
            city={Beijing},
            postcode={100087}, 
            country={China}}

\affiliation[inst3]{organization={Department of Mechanics and Engineering Science, College of Engineering, Peking University},
            city={Beijing},
            postcode={100087}, 
            country={China}}
\cortext[cor1]{\noindent Corresponding authors \newline  \indent E-mail addresses: fanfeng@pku.edu.cn (Fan Feng), hlduan@pku.edu.cn (Huiling Duan)}  
    
\begin{abstract}
  Folding paper along curves leads to spatial structures that have curved surfaces meeting at spatial creases, defined as curve-fold origami. In this work, we provide an Eulerian framework focusing on the mechanics of arbitrary curve-fold origami, especially for multi-curve-fold origami with vertices. We start with single-curve-fold origami that has wide panels. Wide panel leads to different domains of mechanical responses induced by various generator distributions of the curved surface. The theories are then extended to multi-curve-fold origami, involving additional geometric correlations between creases. 
  As an illustrative example, the deformation and equilibrium configuration of origami with annular creases are studied both theoretically and numerically. 
  Afterward, single-vertex curved origami theory is studied as a special type of multi-curve-fold origami. We find that the extra periodicity at the vertex strongly constrains the configuration space, leading to a region near the vertex that has a striking universal equilibrium configuration regardless of the mechanical properties. Both theories and numerics confirm the existence of the universality in the near-field region. In addition, the far-field deformation is obtained via energy minimization and validated by finite element analysis. Our generalized multi-curve-fold origami theory, including the vertex-contained universality, is anticipated to provide a new understanding and framework for the shape programming of the curve-fold origami system.
\end{abstract}

\begin{keyword}
 Origami \sep Curved folds \sep Isometric deformation \sep Geometric mechanics \sep Nonlinear elasticity
\end{keyword}

\end{frontmatter}

\tableofcontents
\section{Introduction}
\label{sec1}
Origami is an ancient art form aiming to achieve complex spatial shapes by folding planar sheets. Classic origami only involves folding along straight lines, and it was not until the 1920s that students in Bauhaus realized sheets could be folded along curves, unveiling a novel branch that expands the traditional art form \citep{CurvedCrease_AAG2008, demaine2011curved}. This new branch, defined as curve-fold origami, opens new possibilities in origami design. Beyond art design, recently curve-fold origami has found applications in soft robotics \citep{doi:10.1126/scirobotics.aaz6262,doi:10.1126/scirobotics.aat0938,feng2024geometry}, metamaterials \citep{lee2021compliant, PRLmeta,meta3}, architecture \citep{tachi2011designing,mouthuy2012overcurvature} and virtual reality technology \citep{VR}, due to its extraordinary abilities in energy storage, fast deployment and stiffness manipulation.

The scientific exploration of curve-fold origami originated with the pioneering work by David \citet{1674542}. In the groundwork, curved origami structures are assumed to deform isometrically, thus they are geometrically modeled as developable surfaces meeting at space curves. Harnessing the theory of curves and surfaces in differential geometry \citep{do1976differential}, the geometrical theories of curved origami are developed in subsequent works \citep{duncan1982folded,10.2307/2589583,kilian2008curved,demaine2015characterization}, revealing the geometric correlations between curved folds and bent panels. The foundational results in this area are:
\emph{Given the reference strips, the two developable surfaces are determined by the space crease in the deformed configurations.}
The theories contribute to the computational design of developable surfaces \citep{10.1145/3180494} and curve-fold origami \citep{mitani2011design,mitani2011interactive,tachi2013composite,jiang2019curve,sasaki2022simple,mundilovacurved}, as well as the predicted shape of M\"{o}bius strip \citep{starostin2007shape,audoly2023analysis}.

Built upon geometric correlations, the mechanical properties of single-curve-fold origami have been investigated.  The isometric deformation of curve-fold origami is determined via elastic energy (creases' folding energy + panels' bending energy) minimization under the constraints of developability. Mechanically, creases are modeled as elastic hinges distributed along curves and panels are modeled as unstrechable Kirchhoff plates with energy density proportional to the principal curvatures' square.  Due to the foundational results in geometric works, the bending energy of developable surfaces reduces to a $1$-dimensional integral of the energy functional associated with the geometry of reference curves. The exact form of the bending energy is derived by Wunderlich \citep{Wunderlich1962,dias2015wunderlich,todres2015translation}, which degenerates to the more widely used \citet{sadowsky1930theorie} functional in narrow panels \citep{starostin2007shape,dias2012geometric,DIAS201457,YU2019657}. 
Taken together, the entire elastic energy is expressed as an integral along the crease, determined by the crease's geometry. Minimizing the energy yields the equilibrium geometry of the deformed crease as well as the configuration of the entire origami. 

In the area of geometric mechanics, the deformation of an annular sheet folding along its centerline is studied in \citet{dias2012geometric}. The energy functional is expressed with curvature $\kappa$ and torsion $\tau$ of the deformed fold. Minimizing the energy leads to an equilibrium configuration that can be described analytically with $\kappa,\tau$. The theoretical results capture the buckling phenomenon well. The theory is further extended to derive a nonlinear rod model for folded elastic narrow strips \citep{DIAS201457}. In the work, curved origami is fitted into the framework of thin rods \citep{MOULTON2013398}, which simplifies numerical solving and stability analyses. In other studies, panels are constrained to deform cylindrically, thus the deformation is derived by generalized Euler-Bernoulli beam theory \citep{lee2018elastica}. However, a general theory for curved origami with arbitrary deformation and reference shapes is absent. Beyond single-curve-fold origami, some previous works have studied multiple curved folds, including the geometric design \citep{dias2012thesis, liu_design_2024}, and the energy of folds with constant curvature and torsion (for example, a comprehensive study for helicoids) \citep{Dias_2012}. However, the Euler-Lagrange equations (or equilibrium equations) for multi-curve-fold systems remain unknown. Thus, how geometry and elasticity contribute to the deformation is still poorly understood. Furthermore, the theory for multi-curve-fold origami with vertices is still lacking.

In this paper, we derive a generalized theory for the deformation of multi-curve-fold origami, particularly with vertices. Our theory reveals a remarkable universality in this system: the deformed configuration of a multi-curve-fold origami near the vertex is solely determined by the symmetry at the vertex and independent of the mechanical properties. To this end, we first derive a generalized theory for 
single-curve-fold origami, removing the narrow panel assumption in many related works. The most significant differences between narrow and wide panels are: geometrically, all generators of a developable surface start from the crease and end at the opposite edge, while in wide panels, generator distribution is more complex; mechanically,  different generator distributions yield various mechanical responses. On the geometry side, we classify the generator distribution comprehensively for wide panels, and on the mechanics side, we utilize the Wunderlich theory to derive the energy density for different generator distributions precisely. Equilibrium configurations are then obtained via energy minimization.

Based on the single-curve-fold theory,
we extend the theoretical framework to multi-curve-fold origami by introducing geometric correlations between creases. Such correlations result in the propagation of generators and deformations between adjacent creases, which are solved by a numerical scheme. Accordingly, 
the energy distribution is derived, leading to the geometric mechanics theory of multi-curve-fold origami. Beyond existing works, we derive the Euler-Lagrange equations revealing how geometry and elasticity determine the multi-curve-fold systems' deformation. Validations are made on an illustrative structure with multiple annular folds, where theoretical and finite element analysis (FEA) results fit well.

More interestingly, we find that a mechanics-independent deformed configuration emerges when the multiple creases meet at a vertex. We prove that in the system of multi-curve-fold origami with a vertex, the periodicity at the vertex yields a universal equilibrium configuration in the domain near the vertex,
regardless of mechanical properties (stiffness of folds and panels) and geodesic curvatures of folds. Specifically, the deformed configuration is determined only by the folding angle at the vertex. In contrast, 
in the far-field domain determined by the aforementioned geometric correlations, the energy minimization yields various equilibrium configurations for different mechanical constants. Both results are validated by FEA. The theories pioneerly indicate how multi-curve-fold origami with vertices deform.

The paper is organized as follows. In Section \ref{sec2}, we start with the preliminaries of differential geometry for the deformation of a single strip and then develop a generalized theory for single-curve-fold origami with arbitrary reference shapes and wide panels. The theory is extended to multi-curve-fold theory by deriving the correlations between creases in Section \ref{sec2new}. Based on the theory, multi-curve-fold origami with a vertex is theoretically studied in Section \ref{sec3}\label{mathrefs}, which is compared with numerical results. Finally, Section \ref{sec:conclusion} concludes the main points of this paper.

\section{Single-curve-fold origami theory}\label{sec2}
\subsection{Preliminaries: geometry of single-curve-fold origami} \label{sec2.1}
We start with the classic differential geometry for the shape description of curve-fold origami in the deformed domain, i.e., in the Eulerian framework. In this paper, we assume the origami structure is deformed isometrically from a flat sheet. Therefore, the flanks are modeled as developable surfaces, which are generically given by
\begin{equation}\label{e1}
\mathbf{r}(S,v)=\mathbf{r}_0(S)+v \mathbf{l}(S),
\end{equation}
with the developability constraint
\begin{equation} \label{e2}
\mathbf{r}_0^{\prime} \cdot (\mathbf{l} \times \mathbf{l}^{\prime})=0.
\end{equation}
Here $\mathbf{r}_0(S)$ is a selected reference curve on the surface, and $\mathbf{l}(S)$ is the generator at the arclength $S$. In curve-fold origami, it is convenient to choose the curved crease as the reference curve and the flanks on the two sides are described by Eq. \eqref{e1}. Solving these two equations yields strong geometric correlations between the reference curve and the developable surface, which are given explicitly via solving the kinematic equations of Frenet frames and Darboux frames (shown in Fig. \ref{fig:sec2122}(a)).

Frenet frame is a moving frame on a space curve that consists of the unit tangent $\mathbf{t}$, the unit normal $\mathbf{n}$, and the unit binormal $\mathbf{b}=\mathbf{t} \times \mathbf{n}$. The derivatives of the vectors along the curve are
\begin{equation}\label{e3}
\begin{aligned}
     \mathbf{t}^{\prime}&=\kappa \mathbf{n},\\  
     \mathbf{n}^{\prime}&=-\kappa \mathbf{t} + \tau \mathbf{b},\\ 
     \mathbf{b}^{\prime}&=-\tau \mathbf{n},
\end{aligned}
\end{equation}
where $\kappa$ and $\tau$ are the curvature and torsion, which determine the shape of the space curve uniquely up to rigid motions by the fundamental theorem of curves \citep{do1976differential}.

Darboux frame is a moving frame on a surface that consists of $\mathbf{e}_1$, $\mathbf{e}_2$, $\mathbf{e}_3$ where $\mathbf{e}_1$, $\mathbf{e}_2$ are on the tangent plane and $\mathbf{e}_3$ is the unit normal of the surface. For a developable surface, we select
\begin{equation}\label{e4}
\begin{aligned}
     \mathbf{e}_1&=\mathbf{t},\\  
     \mathbf{e}_3&=\frac{\mathbf{e}_1 \times \mathbf{l}}{\left|\mathbf{e}_1 \times \mathbf{l}\right|},\\ 
     \mathbf{e}_2&=\mathbf{e}_3 \times \mathbf{e}_1,
\end{aligned}
\end{equation}
to connect the different frames, and the selection makes sure that the frame will not change its direction when moving along a generator. Thus the Darboux vectors only relate to the arc length $S$ and the kinematic equations are
\begin{equation}\label{e5}
\begin{aligned}
     \d\mathbf{r} &= \omega_1 \mathbf{e}_1+ \omega_2 \mathbf{e}_2,\\
     \d\mathbf{e}_1&=\omega_{12} \mathbf{e}_2 + \omega_{13} \mathbf{e}_3, \\  
     \d\mathbf{e}_2&=\omega_{21} \mathbf{e}_1 + \omega_{23} \mathbf{e}_3, \\ 
     \d\mathbf{e}_3&=\omega_{31} \mathbf{e}_1 + \omega_{32} \mathbf{e}_2, \\
     \omega_{ij}&=-\omega_{ji},
\end{aligned}
\end{equation}
where $\omega_{ij}$ is the differential form representing the rotation of the Darboux frame.
According to classical differential geometry (page 63 of \citep{do1976differential}), the nonzero principal curvature $\kappa_m$ satisfies
\begin{equation}\label{e6}
\begin{aligned}
&[\omega_{13}, \omega_{23}]=[\omega_1, \omega_2] B,\\
&\kappa_m = \mathrm{Tr}(B),
\end{aligned}
\end{equation}
where $[\omega_{13},\omega_{23}]$ and $[\omega_1, \omega_2]$ are $1\times 2$ matrices and $B$ is a $2\times 2$ symmetric matrix related to the second fundamental form of the surface. Geodesic curvature $\kappa_g$ and normal curvature $\kappa_n$ of the curve on surface are defined as
\begin{equation}\label{e7}
\begin{aligned}
    \kappa_g&=\kappa \mathbf{n} \cdot \mathbf{e}_2, \\
    \kappa_n&=\kappa \mathbf{n} \cdot \mathbf{e}_3. 
\end{aligned}
\end{equation}
The geodesic curvature is constant during isometric deformation, thus it only depends on the reference planar shape. To describe the distribution of generators we define
\begin{equation} \label{e8}
t=\cot \left\langle\mathbf{e}_1, \mathbf{l}\right\rangle,
\end{equation}
\begin{figure}[t]
    \centering
    \includegraphics[width=1\textwidth]{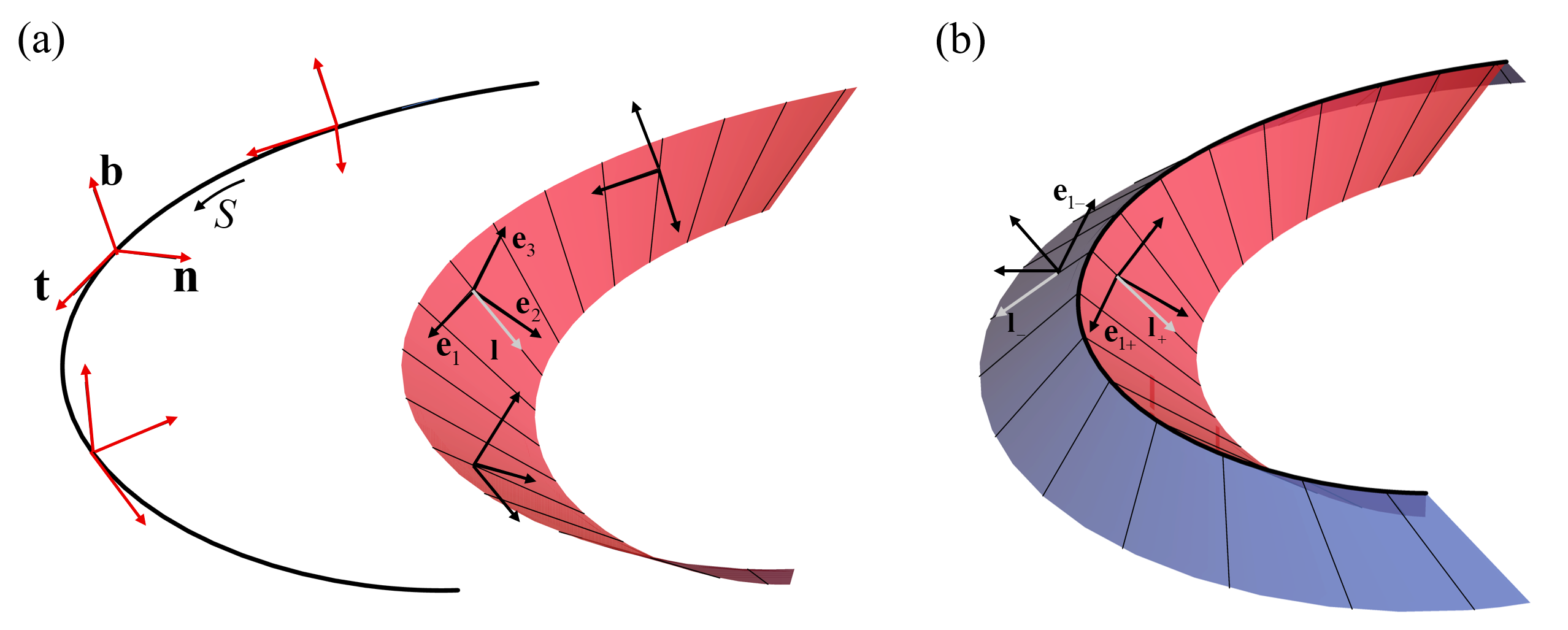}
    \caption{\label{fig:sec2122}Illustrations of Frenet frames and Darboux frames: (a) Frenet frames on a reference curve and Darboux frames on a surface; (b) Darboux frames on panel $+$ (red) and panel $-$ (blue).}
\end{figure}
where $\left\langle\mathbf{e}_1, \mathbf{l}\right\rangle$ represents the angle between the vectors $\mathbf{e}_1$ and $\mathbf{l}$, ranging from $0$ to $\pi$. The generator vector $\mathbf{l}$ is selected as
\begin{equation} \label{e9}
\mathbf{l}=t \mathbf{e}_1+\mathbf{e}_2.
\end{equation}
According to Eq. \eqref{e2} - \eqref{e7}, the derivatives are written as
\begin{equation}\label{e10}
\begin{aligned}
     \mathbf{e}_1^{\prime}&=\kappa_{g} \mathbf{e}_2 + \kappa_{n} \mathbf{e}_3, \\  
     \mathbf{e}_2^{\prime}&=-\kappa_{g} \mathbf{e}_1 - t \kappa_{n} \mathbf{e}_3, \\ 
     \mathbf{e}_3^{\prime}&=-\kappa_{n} \mathbf{e}_1 + t \kappa_{n} \mathbf{e}_2, \\
\end{aligned}
\end{equation}
and the nonzero principal curvature is derived by
\begin{equation} \label{e11}
\kappa_m=\frac{\left(1+t^2\right) \kappa_n}{1+v t^{\prime}-v \kappa_g-v t^2 \kappa_g}.
\end{equation}
On the premise of developability, the generators cannot intersect with each other inside the surface, otherwise, the intersection will introduce divergent principal curvature and divergent bending energy discussed later. To see this, let $\mathbf{r}(S^\star, v^\star)$ be the generator intersection, satisfying $(\partial \mathbf{r}(S^\star, v^\star)/ \partial S) \cdot \mathbf{e}_1=0$. By direct calculation, we have $(\partial \mathbf{r}(S^\star, v^\star)/ \partial S) \cdot \mathbf{e}_1=1+v^\star t^{\prime}-v^\star \kappa_g-v^\star t^2 \kappa_g =0$, resulting in divergent principal curvature at $(S^\star, v^\star)$ according to Eq. (\ref{e11}).

Furthermore, the torsion can be expressed using Darboux frame
\begin{equation}\label{e12}
    \tau=\frac{\mathbf{t} \cdot (\mathbf{t}^{\prime}\times \mathbf{t}^{\prime \prime})}{\kappa^2}=\frac{\mathbf{e}_1 \cdot (\mathbf{e}_1^{\prime}\times \mathbf{e}_1^{\prime \prime})}{\kappa^2}=-t \kappa_n+\frac{\kappa_g}{\kappa^2}\kappa_n^{\prime}-\frac{\kappa_n}{\kappa^2}\kappa_g^{\prime}.
\end{equation}
Therefore, the principal curvature $\kappa_m(S,v)$ distribution on the panel depends on the curvature $\kappa(S)$ and torsion $\tau(S)$ of the curved fold, according to Eq. \eqref{e7}, \eqref{e11} and \eqref{e12}. Symmetrically we derive the geometric variables in the opposite panel
\begin{equation}\label{e13}
\begin{aligned}
    \kappa_{g\pm}&=\kappa \mathbf{n} \cdot \mathbf{e}_{2\pm}, \\
    \kappa_{n\pm}&=\kappa \mathbf{n} \cdot \mathbf{e}_{3\pm}, \\
    t_{\pm}&=\cot \langle\mathbf{e}_{1\pm}, \mathbf{l}_{\pm}\rangle,\\
    \kappa_{m_\pm}&=\frac{(1+t_\pm^2) \kappa_{n\pm}}{1 \pm v_{\pm} t_{\pm}^{\prime}-v_{\pm} \kappa_{g\pm}-v_{\pm} t_{\pm}^2 \kappa_{g\pm}},\\
    \tau&=-\mathrm{t}_{\pm} \kappa_{n \pm} \pm \frac{\kappa_{g \pm}}{\kappa^2} \kappa_{n \pm}^{\prime} \mp \frac{\kappa_{n \pm}}{\kappa^2} \kappa_{g \pm}^{\prime}.
\end{aligned}
\end{equation}
Notice that the Darboux frames of panel $-$ and panel $+$ are illustrated in Fig. \ref{fig:sec2122}(b), where the positive directions of ${\bf e}_1$ are defined in opposite directions for convenience. For single-curve-fold origami with a nonzero $\kappa_{g+}+\kappa_{g-}$ is classified as non-developable curved origami \citep{Fannon,feng2022interfacial,zou2024kinematics}, which can be obtained by stitching together two separate curved strips. In this paper, we focus on the developable case ($\kappa_{g+}+\kappa_{g-}=0$), meaning that the origami is folded from a single sheet of paper.

Folding angle is a key characteristic of origami involving folding energy. In curved origami, the folding angle may vary along the curve, defined as
\begin{equation}\label{e14}
\phi=\langle\mathbf{e}_{3+},\mathbf{e}_{3-}\rangle=|\arccos(\frac{\kappa_{n+}}{\kappa}) \pm \arccos(\frac{\kappa_{n-}}{\kappa})|,
\end{equation}
which only depends on $\kappa$.

\subsection{Domain classification for curve-fold origami with wide flanks}\label{class}

As analyzed in the previous section, generators determine the deformation of panels. Therefore, before moving to mechanics, a discussion on domain classification based on generator distribution is needed. Previous works concerning curved origami assume that all generators starting from the crease end at the opposite edge, while in wide panels this assumption breaks. A comprehensive domain classification is analyzed in this section.

Some works have discussed the generator distribution in a single developable surface \citep{solomon,CHEN2022105068}. As for origami panels, the geometric constraints between folds and panels induce a more complex generator distribution. The generators are classified based on their starting and ending points, and panels are divided into different segments as shown in Fig. \ref{fig:sec222}(a).  Such classification is essential for elastic energy analysis in the following sections. Focusing on smooth developable folds, the properties and distinctions of different domains are analyzed as follows.   

\textbf{Crease-edge domain}. In this domain, generators start at creases and end at edges. In previous works concerning a narrow strip, the crease-edge domain covers the whole panel.  

\textbf{Edge-edge domain}. In this domain, generators start and end at edges. When no external loads are applied on the panel, the edge-edge domain remains planar to satisfy the balance of moments.

\textbf{Plane domain}.  If the generator distribution function $t(S)$ jumps, an infinite $t^{\prime}(S)$ occurs on crease. According to Eq. \eqref{e11}, it leads to $\kappa_m(S,v)=0$. Geometrically, it represents the situation where two generators intersect at one point on the crease, forming a plane domain. We then discuss possible situations that cause the jump in $t(S)$. 
We assume $\mathbf{e_{i\pm}}$ remains continuous over each panel and $\kappa_n$ remains finite along the fold. Otherwise, the discontinuity will induce an infinite curvature and unphysical divergent bending energy. For developable folds, Eq. \eqref{e13} is converted into
\begin{equation}\label{etclass}
t_{\pm}=\frac{1}{\kappa_{n+}}(\frac{\kappa_{g+}}{\kappa^2}\kappa_{n+}^{\prime}-\frac{\kappa_{n+}}{\kappa^2}\kappa_{g+}^{\prime}\pm\tau).
\end{equation}

Analysing Eq. \eqref{etclass}, we give all possible situations that form jumps in $t_{\pm}$ and plane domains.
\begin{enumerate}
\item \noindent$\kappa_{n+}^{\prime}\rightarrow\infty$  and $\kappa_{g+}^{\prime}$ remains finite. From Eq. \eqref{etclass}, since $t_+$ and $\kappa_{n+}$ remain finite, the infinite $\kappa_{n+}^{\prime}$ leads to an infinite $\tau$, which results in a discontinuity of Frenet normal $\mathbf{n}$. Notice that from the geodesic curvature constraint we derive
\begin{equation}\label{gcons}
\frac{\mathbf{n}\cdot\mathbf{e_{2+}}}{\mathbf{n}\cdot\mathbf{e_{2-}}}=\frac{\kappa_{g+}}{\kappa_{g-}}=-1,
\end{equation}
then a discontinuity in $\mathbf{n}$ breaks this constraint, since $\mathbf{e}_{2\pm}$ remain continuous. Therefore, such a situation cannot occur.

\item \noindent $\kappa_{n+}^{\prime}\rightarrow\infty$ and $\kappa_{g+}^{\prime}\rightarrow\infty$. In developable folds, $\kappa_{g+}/\kappa_{g-}$ is constant in jumps of $\kappa_g$. Therefore, from Eq. \eqref{gcons}, in this case the jumps of $\kappa_{n+}$ and $\kappa_{g+}$ should keep the continuity of $\mathbf{n}$, i.e., a finite $\tau$. With Eq. \eqref{etclass}, $\kappa_{n+}^{\prime}/\kappa_{g+}^{\prime}=\kappa_{n+}/\kappa_{g+}$ should be satisfied to derive a finite $\tau$ and form a plane domain.

\item \noindent $\tau^{\prime}\rightarrow\infty.$ In this case, $t$ jumps and a plane domain forms. $\kappa_{n+}$ and $\kappa_{g+}$ may jump as well, and to avoid panel intersections, the jump of $t_{\pm}$ at the jumping point  satisfies $\Delta t_{+}\geq 0$ and $\Delta t_{-} \leq 0$, constraining $\Delta \tau$, $\Delta \kappa_{n+}^{\prime}$ and 
$\Delta \kappa_{g+}^{\prime}$.

\end{enumerate}

In summary, a plane-plane domain forms in the following two situations: (1) $\kappa_{n+}^{\prime}\rightarrow\infty$ and $\kappa_{g+}^{\prime}\rightarrow\infty$ with $\kappa_{n+}^{\prime}/\kappa_{g+}^{\prime}=\kappa_{n+}/\kappa_{g+}$; (2) $\tau^{\prime}\rightarrow\infty$.

\textbf{Crease-crease domain}. In this domain, the generators start from one crease and end at the same crease. At a limiting point $S_{c}^{*}$, the length of the generator tends to zero, where $\kappa_{g+}>0$ and $t_+$ jumps rapidly from $+\infty$ to $-\infty$. In the following discussion we discuss how generator distribute in panel $-$. To derive the generator distribution function $t_-(S)$, we focus on an infinitesimal region around $(S^*_c)$ and define $\kappa^*_{n}(\xi)=\kappa_{n+}(S^*_c+\xi)$, $\kappa^*_{g}(\xi)=\kappa_{g+}(S^*_c+\xi)$, $\kappa^*(\xi)=\kappa(S^*_c+\xi)$, $\tau^*(\xi)=\tau(S^*_c+\xi)$ and $t^*_{\pm}(\xi)=t_{\pm}(S^*_c+\xi)$ with $\xi \rightarrow 0$. Due to possible singularities in Eq. \eqref{etclass}, $t_-(S)$ is given by analyzing the order of the parts $\kappa_{g}^*\kappa_n^{*\prime}/\kappa^{*2}\kappa_n^*$, $\kappa_g^{*\prime}/\kappa^{*2}$ and $\tau^*/\kappa_n^*$ in Eq. \eqref{etclass}.

We start with the order of $t_+$, which is derived from
\begin{equation}\label{ecct}
    t^*_+(\xi)=O(\cot(-\xi\kappa^*_g(\xi)))=O(-(\xi\kappa^*_g(\xi))^{-1}).
\end{equation}

Considering smooth folds, $\kappa_g^*$ is finite. Therefore, $\kappa^{*\prime}_g(\xi)=o(\kappa_g^*\xi^{-1})$. Thus we derive
\begin{equation}\label{ecckg}
    \frac{\kappa_g^{*\prime}(\xi)}{\kappa^*(\xi)^2}=o((\xi\kappa^*_g(\xi))^{-1})=o(-t^*_+(\xi))
\end{equation}

To keep the principal curvature $\kappa_m$ finite, an infinite $t$ derived from Eq. \eqref{ecct} should keep $(1+t^{*2}(\xi))\kappa^*_n(\xi)$ finite. Therefore, $\kappa^*_n(\xi)$ is at most $O(\xi^2\kappa^*_{g}(\xi))$. We then derive
\begin{equation} \label{ecckn}
    \frac{\kappa_g^*(\xi)\kappa_n^{*\prime}(\xi)}{\kappa^{*2}(\xi)\kappa_n^{*}(\xi)}=O((\xi\kappa^*_g(\xi))^{-1}).
\end{equation}

From Eq. \eqref{etclass}, \eqref{ecct}, \eqref{ecckg} and \eqref{ecckn}, the order of $\tau^*/\kappa^*_n$ is derived
\begin{equation}
    \frac{\tau^*(\xi)}{\kappa_n^*(\xi)}=O((\xi\kappa^*_g(\xi))^{-1}),
\end{equation}
which leads to
\begin{equation}
    t^*_{-}(\xi)=O((\xi\kappa^*_g(\xi))^{-1})=O(-t^*_{+}(\xi)).
\end{equation}
Therefore, $t-$ in panel $-$ jumps from $-\infty$ to $+\infty$ rapidly. To avoid the intersection of generators in panel $-$, panel $-$ must be divided into two pieces at $S_c^*$. To help understand this conclusion we give a simple example of curved origami with crease-crease domain. As demonstrated in Fig.~\ref{fig:sec222}(b), a semi-circle fold is folded into
\begin{equation} \label{e16}
    \mathbf{r}_c(S)=(\cos(\frac{S}{R}),\frac{2R}{\pi}-\frac{2R}{\pi} \cos(\frac{\pi}{2}\sin(\frac{S}{R})),\frac{2R}{\pi} \sin(\frac{\pi}{2}\sin(\frac{S}{R}))).
\end{equation}
Consequently, a nontrivial deformation may exist when the crease-crease domain appears in panel $+$ (red domain in Fig.~\ref{fig:sec222}(b)), but the other panel (blue domain in Fig.~\ref{fig:sec222}(b)) will separate at point $S_{c}^{*}$ to avoid the intersection of generators. 

Crease-crease generators also bring extra geometric constraints between linked points. Suppose that a generator starting from $S_1$ ends at $S_2$ (as illustrated in Fig. \ref{fig:sec222}(a)), from $\kappa_m(S_1,v_0(S_1))=\kappa_m(S_2,0)$ we derive that $\kappa(S_2)$ is a function of $(\kappa(S_1),\kappa^{\prime}(S_1),\kappa^{\prime\prime}(S_1),\tau(S_1),\tau^{\prime}(S_1))$.

\begin{figure}
    \centering
    \includegraphics[width=1\textwidth]{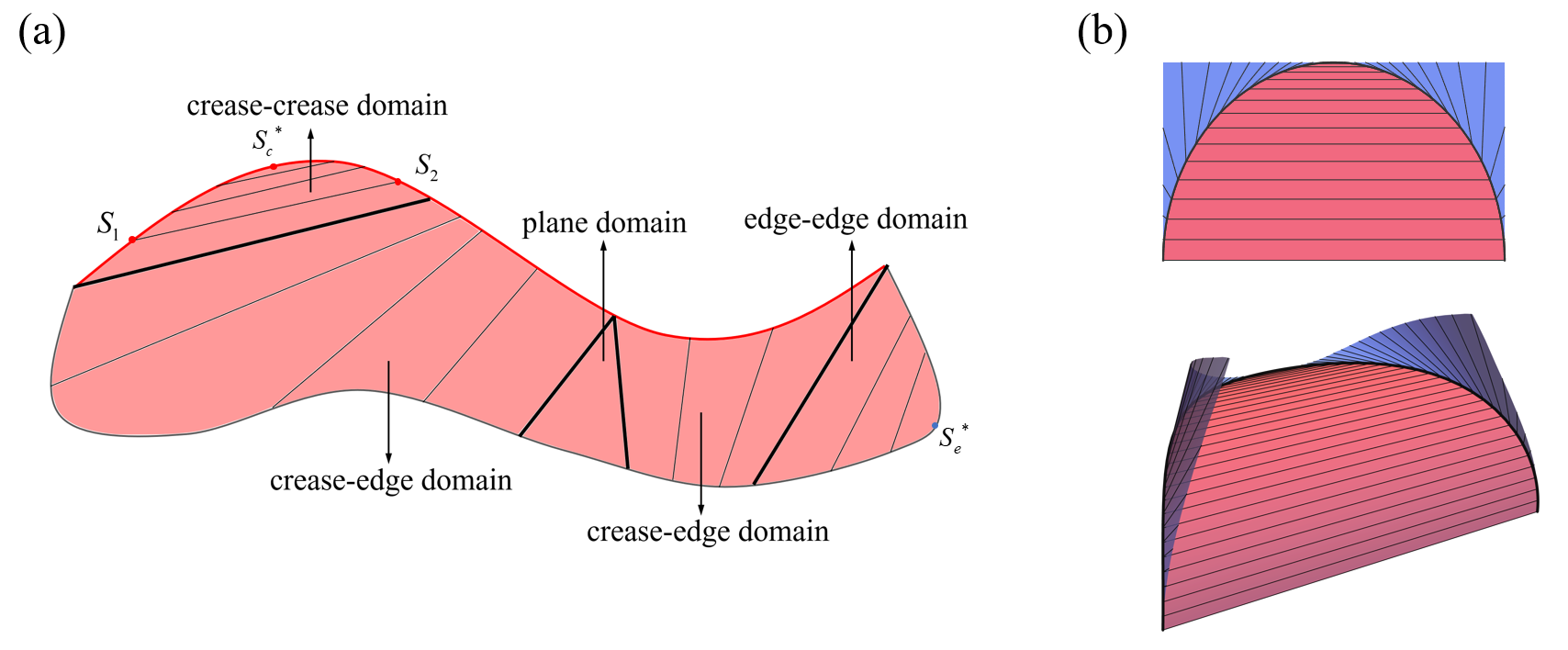}
    \caption{\label{fig:sec222} (a) Generator distribution on a panel with arbitrary reference shape: the red curve represents crease and the thick generators represent boundaries between different domains; (b) curved origami structure with crease-crease domain, and the generator distribution is shown in reference and deformed configuration.}
\end{figure}
\subsection{Mechanics of single-curve-fold origami} \label{sec2.2}
As a conclusion of Section \ref{sec2.1}, the deformation of curved origami depends on the curvature and torsion of the crease. However, in practical scenarios, controlling the curvature and torsion at every point on the crease is impractical. Instead, the folding procedure is conducted by adding mechanical loads or geometric constraints. The mechanical properties of the crease and panels thus play a significant role in the process, and the equilibrium equations of single-curve-fold systems can be derived via energy analysis. 

\subsubsection{Energy distribution in single-curve-fold origami}
The elastic energy of curve-fold origami consists of folding energy concentrated on creases and bending energy distributed in panels.
For folding energy, we assume that folds are modeled as linear elastic torsional springs with direction $\mathbf{t}(S)$. The 
energy of the fold between $(S,S+dS)$ is expressed as
\begin{equation} \label{e17}
  \d E_{fold}=\varepsilon_f(S) \d S = \frac{1}{2}k_c(\phi(S)-\phi_0(S))^2 \d S,
\end{equation}
where $k_c$ is the torsional stiffness per unit length, and $\varepsilon_f(S)$ is the line energy density of the crease. From Eq. \eqref{e14}, $\varepsilon_f(S)$ is a function of $\kappa(S)$
 \begin{equation} \label{e18}
\varepsilon_f(S)=\varepsilon_f(\kappa(S);S).
\end{equation}

For bending energy, the surface energy density at $(S,v)$ of a bending developable surface is proportional to $\kappa_m^2(S,v)$ derived in Eq. \eqref{e13}. Therefore the bending energy between generators starting from $S$ and $S+dS$ is
 \begin{equation} \label{e19}
 \begin{aligned}
 \d E_{bend\pm}&=\varepsilon_{b_{0\pm}}(S)\d S, \\
 &=\frac{D}{2}\int_0^{v_{0\pm}(S)}\kappa_{m\pm}^2(S,v)\d v \d S,\\&=
  \frac{D}{2} \frac{\left(1+t_{\pm}^2(S)\right)^2 \kappa_{n\pm}^2(S)}{\pm t_{\pm}^{\prime}(S)-\left(1+t_{\pm}^2(S)\right) \kappa_{g\pm}(S)} \ln [1\pm t_{\pm}^{\prime}(S)v_{0\pm}(S)-(1+t^2(S)) \kappa_{g\pm}(S)v_{0\pm}(S)] \d S,
 \end{aligned}
\end{equation}
where  $\varepsilon_{b_0}(S)$ is the equivalent line density of bending energy after dimensional reduction and $D$ is the bending modulus of the panel. Note that $v_0(S)$, the projection of the generator length vector onto $\mathbf{e}_2(S)$, is determined by $t(S)$ along with the reference configuration and $\kappa_g(S)$ remains invariant during isometric deformation, therefore from Eq. \eqref{e13},
 $\varepsilon_{b_0}(S)$ is generically expressed as
\begin{equation} \label{e20}
    \varepsilon_{b_0}(S)= \varepsilon_{b_0}(\kappa,\kappa^{\prime},\kappa^{\prime\prime},\tau,\tau^{\prime};S),
\end{equation}
which indicates how bending energy correlates with the geometry of the deformed curve. 

For different domains defined in Section \ref{class}, the bending energy integral is slightly different. Since the bending energy in the edge-edge domain and the crease-crease domain is calculated twice when integrating along the crease or the edge, the equivalent line density energy is augmented to
\begin{equation}
\begin{aligned}
&\varepsilon_{b}\left(S_{c-e}\right)=\varepsilon_{b_0}\left(S_{c-e}\right),\\
&\varepsilon_{b}\left(S_{c-c}\right)=\frac{1}{2} \varepsilon_{b_0}\left(S_{c-c}\right), \\
&\varepsilon_{b}\left(S_{e-e}\right)=\frac{1}{2} \varepsilon_{b_0}\left(S_{e-e}\right),
\end{aligned}
\end{equation}
where $S_{c-c},S_{c-e}$ and $S_{e-e}$ represent the arclength parameter of the crease-crease, crease-edge, and edge-edge domain, respectively. Taking together the integral of $\varepsilon_b(S)$ along the crease and the edge-edge segment yields the bending energy of the whole panel. Therefore, the total elastic energy of the curve-fold origami is
\begin{equation} \label{e22}
  E= E_{fold}+E_{bend}= \int_{crease} (\varepsilon_{b+}+\varepsilon_{b-}+\varepsilon_f) \d S + \int_{e_{+}-e_{+}} \varepsilon_{b+} \d S+ \int_{e_{-}-e_{-}} \varepsilon_{b-} \d S.
\end{equation}
Additionally, we define
\begin{equation} \label{e23}
  \varepsilon(S)= \varepsilon_{b+}(S)+\varepsilon_{b-}(S)+\varepsilon_f(S)=\varepsilon(\kappa,\kappa^{\prime},\kappa^{\prime\prime},\tau,\tau^{\prime};S),
\end{equation}
as the line density of elastic energy for the crease-dependent domain. In conclusion, the energy of single-curve-fold origami systems is expressed as a one-dimensional integral of the function determined by $\kappa, \tau$ of folds and edges. The variation of energy reads the equilibrium equations 
that govern $\kappa$ and $\tau$ of the deformed curve, which will be derived later.
\subsubsection{Equilibrium equations for freely deformed systems}
 We start with the simplest situation where no geometric constraints and mechanical loads are applied to the system. Practically, it describes how an origami structure naturally relaxes to a nonplanar configuration. Note that the edge-edge domain keeps planar when no external loads are added to the panel. Then, the system is only crease-dependent. 
 
 For origami only with the crease-edge domain, the variation of the energy over $\delta \kappa$ and $\delta \tau$ at equilibrium satisfies
\begin{equation}\label{e25}
\begin{aligned}
    \delta E&=\int_{c-e} (\partial_{\kappa}\varepsilon-(\partial_{\kappa^{\prime}} \varepsilon)^{\prime}+(\partial_{\kappa^{\prime\prime}} \varepsilon)^{\prime\prime})\delta \kappa \d S+ \int_{c-e} (\partial_{\tau}\varepsilon+(\partial_{\tau^{\prime}} \varepsilon)^{\prime})\delta \tau \d S\\
    &+(\partial_{\kappa^{\prime}} \varepsilon-(\partial_{\kappa^{\prime\prime}} \varepsilon)^{\prime})\delta \kappa \bigg |_{0}^{S_0} +\partial_{\kappa^{\prime\prime}} \varepsilon \delta \kappa^{\prime}  \bigg |_{0}^{S_0} + \partial_{\tau^{\prime}}\varepsilon\delta \tau  \bigg |_{0}^{S_0}=0,
\end{aligned}
\end{equation}
The variational principal indicates the Euler-Lagrange equations $f=0$ and  $g=0$ along the crease,
where
\begin{equation}\label{e26}
\begin{aligned}
&f(\kappa,\kappa^{\prime},\kappa^{\prime\prime},\kappa^{(3)},\kappa^{(4)},\tau,\tau^{\prime},\tau^{\prime\prime},\tau^{(3)};S)=\partial_{\kappa}\varepsilon-(\partial_{\kappa^{\prime}} \varepsilon)^{\prime}+(\partial_{\kappa^{\prime\prime}} \varepsilon)^{\prime\prime}, \\
&g(\kappa,\kappa^{\prime},\kappa^{\prime\prime},\kappa^{(3)},\tau,\tau^{\prime},\tau^{\prime\prime};S)=\partial_{\tau}\varepsilon+(\partial_{\tau^{\prime}} \varepsilon)^{\prime}, 
\end{aligned}
\end{equation}
and the boundary conditions $\partial_{\kappa^{\prime}} \varepsilon-(\partial_{\kappa^{\prime\prime}} \varepsilon)^{\prime}=0$, $\partial_{\kappa^{\prime\prime}} \varepsilon=0$, $\partial_{\tau^{\prime}}\varepsilon=0$ on both ends $S=0, S=S_0$. 

Considering the crease-crease domain, the corresponding points at the crease on the same generator have additional geometric constraints (Fig.~\ref{fig:sec222}), i.e., only one point is free. Let the segment of the crease-crease domain be given by the arclength interval $(S^*_1, S^*_2)$ with a limiting point $S^*_c$ in between. The integral domain of the total energy can be reduced from $(S^*_1, S^*_2)$ to $(S^*_1, S^*_c)$, given by
\begin{equation}
    \int^{S^*_2}_{S^*_1}\varepsilon(S)\d S=\int^{S^*_c}_{S^*_1}(\varepsilon_{b_{0+}}(S)+\varepsilon_{b_{0-}}(S)+\varepsilon_{f}(S))\d S+\int^{S^*_2}_{S^*_c}(\varepsilon_{b_{0-}}(S)+\varepsilon_{f}(S))\d S=\int^{S^*_c}_{S^*_1}\varepsilon^*(S)\d S,
\end{equation}
where
\begin{equation}\label{evarestar}
\varepsilon^*(S)=\varepsilon^*(\kappa,\kappa^{\prime},...,\kappa^{(4)},\tau,\tau^{\prime},\tau^{\prime\prime},\tau^{\prime\prime\prime};S),
\end{equation}
and the Euler-Lagrange equations $f^*=0, g^*=0$ follow, where
\begin{equation} \label{eccd}
\begin{aligned}
f^*&= \sum_{i=0}^{4}(-1)^i \frac{\d^i}{\d S^i} \partial_{\kappa^{(i)}} \varepsilon^*, \\
g^*&= \sum_{i=0}^{3}(-1)^i \frac{\d^i}{\d S^i} \partial_{\tau^{(i)}} \varepsilon^*.
\end{aligned}
\end{equation}

The Euler-Lagrange equations in different domains are derived in this section, but solving the whole deformation of systems with different domains is challenging and needs further discussion.  The domain distribution is generally unknown before solving the problem, thus all possible distributions and the related solutions should be considered. Furthermore, extra boundary conditions connecting different domains should be satisfied. A further discussion for single-curve-fold origami with different domains is presented in \ref{AppA}, mainly focusing on the additional boundary conditions.
\subsubsection{Equilibrium equations for systems with geometric constraints}\label{sec2333}
In other cases, curved folding is conducted by applying geometric loads at both ends of the crease, such as fixing $\mathbf{r}_c(0)$ and $\mathbf{r}_c(S_0)$ or giving closeness condition $\mathbf{r}_c(0)=\mathbf{r}_c(S_0)$. Since no mechanical loads are applied on the panels, no edge-edge domain participates in the deformation, and the system is crease-dependent. The equilibrium state is also derived using the minimum total potential energy principal. The geometric constraints will introduce Lagrangian multipliers and thus make the Euler-Lagrange equations different from Eq. \eqref{e26}. However, many geometric constraints, including the displacement constraints, are not convenient to be expressed as functions of $\kappa$ and $\tau$. Instead, we variate $E$ by giving a virtual displacement $\delta \mathbf{r}_c$ of the crease. $\delta \kappa$ and $\delta \tau$ are then written as
\begin{equation}\label{e29}
\begin{aligned}
\delta \kappa&=\delta\left(\mathbf{r}_c^{\prime \prime} \cdot \mathbf{n}\right)=\mathbf{n} \cdot \delta \mathbf{r}_c^{\prime \prime},\\
\delta \tau &=\frac{1}{\kappa}\left(\delta\left(\mathbf{r}_c^{\prime \prime \prime} \cdot \mathbf{b}\right)-\tau \delta \kappa\right)=\kappa \mathbf{b} \cdot \delta \mathbf{r}_c^{\prime}-\frac{\kappa^{\prime}}{\kappa^2} \mathbf{b} \cdot \delta \mathbf{r}_c^{\prime \prime}-\frac{\tau}{\kappa} \mathbf{n} \cdot \delta \mathbf{r}_c^{\prime \prime}+\frac{1}{\kappa}\mathbf{b} \cdot \delta \mathbf{r}_c^{\prime \prime \prime}.
\end{aligned}
\end{equation}
$\delta E$ in Eq. \eqref{e25} is then converted into
\begin{equation} \label{e31}
  \delta E= \int_{crease}(g \kappa \mathbf{b} \cdot \delta \mathbf{r}_c^{\prime}-\frac{g\kappa^{\prime}}{\kappa^2} \mathbf{b} \cdot \delta \mathbf{r}_c^{\prime \prime}+(f-\frac{g\tau}{\kappa}) \mathbf{n} \cdot \delta \mathbf{r}_c^{\prime \prime}+\frac{g}{\kappa}\mathbf{b} \cdot \delta \mathbf{r}_c^{\prime \prime \prime})\d S + b.v. ,
\end{equation}
where $b.v.$ stands for boundary values and $\delta \mathbf{r}_c$ satisfies the extensibility condition for the crease $\mathbf{t} \cdot \delta \mathbf{r}_c^{\prime} = 0$. Integrating Eq. \eqref{e31} by parts, we have
\begin{equation}
    \delta E= \int_{crease}[(-f
    \kappa^{\prime}-2\kappa f^{\prime}-g^{\prime}\tau) \mathbf{t}+(-f \tau^2-f\kappa^2+f^{\prime\prime}+g\kappa\tau+\frac{g^{\prime}}{\kappa}\tau^{\prime}+2\tau (\frac{g^{\prime}}{\kappa})^{\prime})\mathbf{n}+(f \tau^{\prime}+2\tau f^{\prime}-(g\kappa)^{\prime}+\frac{g^{\prime}}{\kappa}\tau^2-(\frac{g^{\prime}}{\kappa})^{\prime\prime})\mathbf{b}]\cdot \delta \mathbf{r}_c \d S +b.v.
\end{equation}
However, $\mathbf{t} \cdot \delta \mathbf{r}$, $\mathbf{n} \cdot \delta \mathbf{r}$ and $\mathbf{b} \cdot \delta \mathbf{r}$ are not totally independent. This can be seen by
noting that
\begin{equation}\label{e35}
    (\mathbf{t} \cdot \delta \mathbf{r}_c)^{\prime}=\kappa \mathbf{n} \cdot \delta \mathbf{r}_c+\mathbf{t} \cdot \delta \mathbf{r}_c^{\prime}=\kappa \mathbf{n} \cdot \delta \mathbf{r}_c,
\end{equation}
and therefore
\begin{equation} \label{e36}
    \int_{crease} h\mathbf{n} \cdot \delta \mathbf{r}_c\d S=  \int_{crease} \frac{h}{\kappa}  (\mathbf{t} \cdot \delta \mathbf{r}_c)^{\prime} \d S=-\int_{crease} (\frac{h}{\kappa})^{\prime} \mathbf{t} \cdot \delta \mathbf{r}_c \d S+b.v.,
\end{equation}
for any given $h(S)$.
Combining $\mathbf{t} \cdot \delta \mathbf{r}$ and $\mathbf{n} \cdot \delta \mathbf{r}$ using Eq. (\ref{e36}), Eq. \eqref{e31} is converted to
\begin{equation} \label{e37}
    \delta E= \int_{crease} (f_{gc} \mathbf{t}+g_{gc} \mathbf{b})\cdot \delta \mathbf{r}_c\d S+b.v.=0,
\end{equation}
where
\begin{equation} \label{e38}
\begin{aligned}
     f_{gc}&=f\kappa^{\prime}+2\kappa f^{\prime}+(\frac{-f\tau^2-f\kappa^2+f^{\prime\prime}}{\kappa})^{\prime}+g\tau^{\prime}+2\tau g^{\prime}+(\frac{2\tau (\frac{g^{\prime}}{\kappa})^{\prime}+\frac{g^{\prime}}{\kappa} \tau^{\prime}}{\kappa})^{\prime}=0,\\
     g_{gc}&=\frac{g^{\prime}}{\kappa}\tau^2-(g\kappa)^{\prime}-(\frac{g^{\prime}}{\kappa})^{\prime\prime}+f \tau^{\prime}+2\tau f^{\prime}=0,
\end{aligned}
\end{equation}
with $f,g$ given by Eq. \eqref{e26} for crease-crease domain or \eqref{eccd} for crease-edge domain. The variation method using virtual displacement equivalently changes the variable from arc length parameter $S$ to the vector coordinate $\mathbf{r}_c$, and the geometric constraints are then transformed into boundary conditions. Therefore, to make Eq. \eqref{e37} satisfied for any $\delta \mathbf{r}_c$, the Euler-Lagrange equations under geometric constraints are $f_{gc}=0, g_{gc}=0$.
 
 As a special case, a closed smooth crease automatically satisfies all the boundary conditions, since the variations of the vectors at $\mathbf{r}_c(0)$ and $\mathbf{r}_c(S_0)$  are equal, leading to $b.v.=0$. For systems with small torsion, narrow panels and large crease stiffness, i.e., $\tau\rightarrow 0, v_0\kappa_g\rightarrow 0$ and $D\kappa_g/k_c\rightarrow 0$, Eq. \eqref{e38} degenerates to the equilibrium equations derived in previous work concerning folded annular strips \citep{dias2012geometric}.

In summary, we develop a general framework for understanding the deformation of single-curve-fold origami with arbitrary reference shapes in various mechanical scenarios.

\section{Multi-curve-fold origami theory}\label{sec2new}
Based on the single-curve-fold origami theory derived in Section \ref{sec2}, we move to multi-curve-fold origami. One unique geometric property of multi-curve-fold origami is that the deformation of one fold directly influences the neighboring folds. Therefore, multi-curve-fold origami cannot be theoretically modeled as simple combinations of single-curve-fold origami. In this section, we start with analyzing geometric correlations between folds to derive the geometric mechanics of multi-curve-fold origami.
\subsection{Geometric correlations between curved folds} \label{sec2.3}
 The geometric correlations are built by generators connecting different folds. Suppose that the two generators start at point $S_i$ on crease $i$ and end at point $S_{i+1}$ on crease $i+1$ and point $S_{i-1}$ on crease $i$-1, as shown in Fig. \ref{fig:sec2.31}(a). When the reference shapes and crease patterns are given, $S_{i\pm1}$ are determined by $S_i$ and $\gamma_{i\pm}$, the angle between the crease and the generator. $\gamma(S)$ features the generator distribution by
\begin{equation} \label{e43}
    \gamma_{\pm}(S)=\arccot{t_{\pm}(S)}.
\end{equation}
Therefore, $S_{i\pm1}$, $t_{(i-1)+}(S_{i-1})$ and $t_{(i+1)-}(S_{i+1})$ are expressed in terms of $S_{i}$ and $t_{i\pm}(S)$ when the reference shape and crease pattern are given
\begin{equation} \label{eS}
\begin{aligned}
    &S_{i\pm1}=S_{i\pm1}(t_{i\pm}(S_i);S_i),\\
    &t_{(i\pm1)\mp}(S_{i\pm1}(S_i))=t_{(i\pm1)\mp}(t_{i\pm}(S_i);S_i).
 \end{aligned}
\end{equation}
 \begin{figure}[t!] 
    \centering
    \includegraphics[width=1\textwidth]{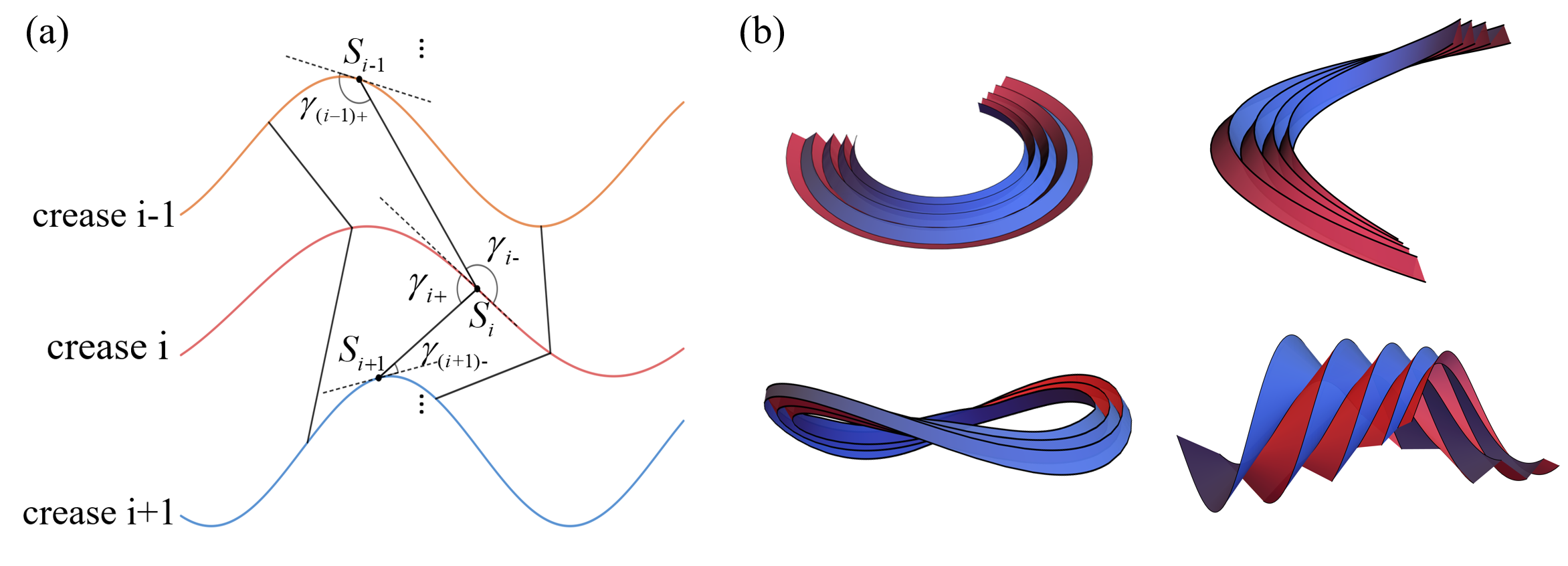}
    \caption{\label{fig:sec2.31} Illustrations of how deformation propagates between creases: (a) crease patterns and geometric relations between neighboring creases illustrated in the reference configuration; (b) examples of numerical experiments. From top left to bottom right: arc strips folded without torsion; arc strips folded with torsion; buckling of closed arc ribbons; a non-Euclidean structure with varying $\kappa_{g\pm}$, folded with torsion.}
\end{figure}
Besides, the correlation of $\kappa_m$ is given by
\begin{equation} \label{e44}
    \kappa_{m{(i\pm1)}\mp}(S_{i\pm1},0)=\kappa_{mi\pm}(S_i,v_{0i\pm}(S_i)),
\end{equation}
which introduces the correspondence of
 $\kappa_n$ between the neighboring creases as
\begin{equation} \label{ek}
    (1+t_{i\pm1}^2(S_{i\pm1}))\kappa_{n(i\pm1)\mp} (S_{i\pm1})
    =\frac{\left(1+t_{i\pm}^2(S_i)\right) \kappa_{ni\pm}(S_i)}{1 \pm v_{0i\pm}(S_i) t_{i\pm}^{\prime}(S_i)-v_{0i\pm}(S_i) \kappa_{gi\pm}(S_i)-v_{0i\pm}(S_i) t_{i\pm}^2(S_i) \kappa_{gi\pm}(S_i)}.
\end{equation}
From Eq. \eqref{e13}, $t$ and $\kappa_n$ are expressed as functions of $\kappa$ and $\tau$, therefore the geometric correlations between neighboring creases can be generally represented in terms of the crease curvature and torsion
\begin{equation} \label{ekt}
\begin{aligned}
\kappa_{i\pm1}(S_{i\pm1})&=\kappa_{i\pm1}(\kappa_i(S_i),\kappa_i^{\prime}(S_i),\kappa_i^{\prime\prime}(S_i),\tau_i(S_i),\tau_i^{\prime}(S_i);S_i),\\
\tau_{i\pm1}(S_{i\pm1})&=\tau_{i\pm1}(\kappa_i(S_i),\kappa_i^{\prime}(S_i),\kappa_i^{\prime\prime}(S_i),\kappa_i^{\prime\prime\prime}(S_i),\tau_i(S_i),\tau_i^{\prime}(S_i),\tau_i^{\prime\prime}(S_i);S_i),
\end{aligned}
\end{equation}
where $S_{i+1}=S_{i+1}(\kappa_i(S_i),\kappa_i^{\prime}(S_i),\tau_i(S_i);S_i)$. Here, the theory is only applicable to the domains that are covered by continuously connected generators between creases.

The propagation of deformation can be visualized in numerical experiments in the following steps. (1) From the given curvature and torsion, the Frenet frame fixed to the crease (defined as crease $i$) is derived, and the generator distribution on panel $i$ is solved from Eq. \eqref{e13}. (2) Darboux frame is then obtained, leading to the parametric equation of panel $i$ and crease $i+1$. (3) Recursively, we can plot the deformed configuration of parts related to the given crease. In some cases where Eq. \eqref{e47} can not be explicitly solved, discrete differential geometry is used in the procedure \citep{bobenko2008discrete, Muller2021}. The results of four numerical experiments are shown in Fig. \ref{fig:sec2.31}(b). This numerical method contributes to the geometric design of curved origami with multiple creases.

\subsection{Mechanics of multi-curve-fold origami}\label{secnew22}
For structures with multiple creases, its elastic energy consists of the folding energy of $N$ creases and the bending energy of $N+1$ panels
\begin{equation} \label{e48}
    E=\sum_{i=0}^{N} E_{b_i}+\sum_{i=1}^{N} E_{f_i},
\end{equation}
The bending energy of panel $i$, $E_{b_i}$, and the folding energy of crease $i$, $E_{f_i}$, are given by
\begin{equation} \label{emultienergy}
\begin{aligned}
    E_{b_i}&=\int_{crease~i} \varepsilon_{b_{i+}}(S) \d S + \int_{crease~i+1} \varepsilon_{b_{(i+1)-}}(S) \d S-\int_{c_i-c_{i+1}} \varepsilon_{b_{i+}}(S) \d S,\\
    E_{f_i}&=\int_{crease~i} \varepsilon_{f_i}(S)\d S,
\end{aligned}
\end{equation}
where $\varepsilon_{f_{i}}$ and $ \varepsilon_{b_{i\pm}}$ are derived in Eq. \eqref{e17} - \eqref{e20}. The system reaches its stable state when the energy minimizes under the constraints expressed in Eq. \eqref{ekt}. With proper boundary conditions, the whole panels are covered by continuously connected generators. The curvature and torsion of crease $N$ are derived from Eq. \eqref{ekt}
\begin{equation}\
\begin{aligned}
        \kappa_{N}(S_N)=\kappa_N(\kappa_1,\kappa_1^{\prime},...,\kappa_1^{(2N-2)},\tau_1,\tau_1^{\prime},...,\tau_1^{(2N-3)};S),\\
    \tau_{N}(S_N)=\kappa_N(\kappa_1,\kappa_1^{\prime},...,\kappa_1^{(2N-1)},\tau_1,\tau_1^{\prime},...,\tau_1^{(2N-2)};S),
\end{aligned}
\end{equation}
and from Eq. \eqref{e18} and \eqref{e20}, Eq. \eqref{e48} is simplified into
\begin{equation}\label{e47}
E=\int_{ crease~1}\varepsilon(\kappa,\kappa^{\prime},...,\kappa^{(2N)},\tau,\tau^{\prime},...,\tau^{(2N-1)};S)\d S.
\end{equation}
According to principals of variation, we obtained the Euler-Lagrange equations $f=0, g=0$ for structures without geometric constraints, where
\begin{equation}\label{emulti1}
\begin{aligned}
f&= \sum_{i=0}^{2 N}(-1)^i \frac{\d^i}{\d S^i} \partial_{\kappa^{(i)}} \varepsilon, \\
g&= \sum_{i=0}^{2N-1}(-1)^i \frac{\d^i}{\d S^i} \partial_{\tau^{(i)}} \varepsilon.
\end{aligned}
\end{equation}
 Following Eq. \eqref{e29} - \eqref{e38}, the Euler-Lagrange equations for systems with geometric constraints can be derived
\begin{equation} \label{emulti2}
\begin{aligned}
    f_{gc}(\kappa,\kappa^{\prime},...,\kappa^{(4N+3)},\tau,\tau^{\prime},...,\tau^{(4N+1)};S)&=f\kappa^{\prime}+2\kappa f^{\prime}+(\frac{-f\tau^2-f\kappa^2+f^{\prime\prime}}{\kappa})^{\prime}+g\tau^{\prime}+2\tau g^{\prime}+(\frac{2\tau (\frac{g^{\prime}}{\kappa})^{\prime}+\frac{g^{\prime}}{\kappa} \tau^{\prime}}{\kappa})^{\prime}=0,\\
    g_{gc}(\kappa,\kappa^{\prime},...,\kappa^{(4N+2)},\tau,\tau^{\prime},...,\tau^{(4N+1)};S)&=\frac{g^{\prime}}{\kappa}\tau^2-(g\kappa)^{\prime}-(\frac{g^{\prime}}{\kappa})^{\prime\prime}+f \tau^{\prime}+2\tau f^{\prime}=0,
\end{aligned}
\end{equation}
where $f$ and $g$ are given by Eq. \eqref{emulti1}.
\subsection{Deformation of curved origami with annular folds}\label{sec2.5}

In this section, we theoretically explain how curved origami structures with annular folds deform within the framework established previously.  The equilibrium configuration minimizes the total elastic energy (folding energy + bending energy) and results in a buckled structure with nonzero torsion. 

Suppose that a generator starts from $S_1$ and ends at $S_2$, as shown in Fig. \ref{fig:sec24}(a). The geometric correlations are given by
\begin{equation}\label{e241}
\begin{aligned}
    &\cos\gamma_{1+}\kappa_{g2}=\cos\gamma_{2-}\kappa_{g1}=\frac{\sin\gamma_{1+}\sin\Delta\theta}{v_0},\\
    &v_0=\sin^2\gamma_{1+}(\sqrt{\kappa_{g1}^{-2}+\frac{1}{\sin^2\gamma_{1+}}(\kappa_{g2}^{-2}-\kappa_{g1}^{-2})}-\kappa_{g1}^{-1}),\\
    &S_2=\frac{\kappa_{g1}}{\kappa_{g2}}S_1+\frac{\Delta\theta}{\kappa_{g2}}.
\end{aligned}
\end{equation}
Thus $t_2,v_0,S_2$ are derived as functions of ($\kappa_{n1},\tau_{n1};S_1$). From Eq. \eqref{ekt}, \eqref{emultienergy} and \eqref{e47}, the energy of the system is derived
\begin{equation}
    E=\int_{crease~1} \varepsilon_1(S_1)\d S_1+\int_{crease~ 2} \varepsilon_2(S_2)\d S_2=\int_{crease~1}(\varepsilon_1(S_1)+\varepsilon_2(S_2(S_1))\frac{\d S_2}{\d S_1})\d S_1.
\end{equation}
Substituting $\varepsilon(S_1)=\varepsilon_1(S_1)+\varepsilon_2(S_2(S_1))\frac{\d S_2}{\d S_1}$ into Eq. \eqref{emulti1} and \eqref{emulti2} yields the equilibrium equations of the system.

Due to the difficulty in numerically solving the full equilibrium equations, we derive an approximate result using the perturbation theory inspired by previous work concerning curved origami with single annular fold \citep{dias2012geometric}, which predicts the amplitude of the normal curvature ($\kappa_n(\pi/(2\kappa_g))-\kappa_n(0)$) and the torsion distribution.
Here we use our generalized energy distribution and keep a higher order term in $\kappa_n$ to reach a better prediction in multi-curve-fold origami. We assume the inner fold has
 $\kappa_{n}(S)=\kappa_{n0}+a_1\cos{(2S\kappa_g)}+a_2\cos{(4S\kappa_g)}$ and  $\tau(S)=b_1\sin{(2S\kappa_g)}+b_2\sin{(4S\kappa_g)}$. From the zeroth order Euler-Lagrange equations, $\kappa_{n0}$ satisfies
\begin{equation}
    \frac{4k^*(2\arctan{\frac{\kappa_{n0}}{\kappa_{g1}}-\phi_0)}}{1+(\frac{\kappa_{n0}}{\kappa_{g1}})^2}-\frac{\kappa_{n0}}{\kappa_{g1}}\ln{\frac{1+2\kappa_{g1}d}{1-\kappa_{g1}d}}=0,
\end{equation}
where $k^*=k_c/D\kappa_{gi}$ is the non-dimensionalized stiffness, chosen to be equal on crease $1$ and $2$. $d$ is the strip width and $\phi_0$ is the rest angle of folds. The closeness of the fold yields a coupling between $\kappa$ and $\tau$, which is satisfied by choosing an appropriate $b_1$. We then vary $a_1, b_2$ to minimize the energy, while $a_2$ is chosen to satisfy the initial value of $\kappa_n$.

We compare the results obtained from the theory and finite element analysis (FEA). The finite element modeling of curved origami is mainly discussed in Section \ref{sec44}, and the qualitative comparison between the FEA result and the real model is shown in Fig. \ref{fig:sec24}(b), indicating that FEA captures the buckling phenomenon well. To illustrate that both geometry and elasticity contribute to the deformation, we calculate two cases with $k^*=k_c/D\kappa_{g}=2,10$ that have the same initial value of $\kappa_n$ ensured by choosing the rest angle $\phi_0$ is changed to ensure that the initial values $\kappa_{n}(0)$ ($\phi_0\approx 0.63\pi, 0.58\pi$). The geometric coefficient is selected as $\kappa_{g1}d=1/10$. The quantatative
 comparison between theoretical and FEA results of $\kappa_{n},\tau$ are shown in Fig. \ref{fig:sec24}(c), where we define $\kappa_n=|\kappa_{n\pm}|,\kappa_g=|\kappa_{g\pm}|$ for convenience. The results show that the approximate theory predicts the distribution of torsion and curvature well, which explains the buckling phenomenon. To obtain the solutions of $\kappa_n$ more precisely, the higher order terms ($\cos(6S),\cos(8S)$) are needed. The results also illustrate that despite the same initial values of $\kappa_n$, the elastic coefficients affect the deformations in the whole field, which is quite different from the single-vertex multi-curve-fold origami that will be discussed shortly. 
\begin{figure}[t!]
    \centering
    \includegraphics[width=1\textwidth]{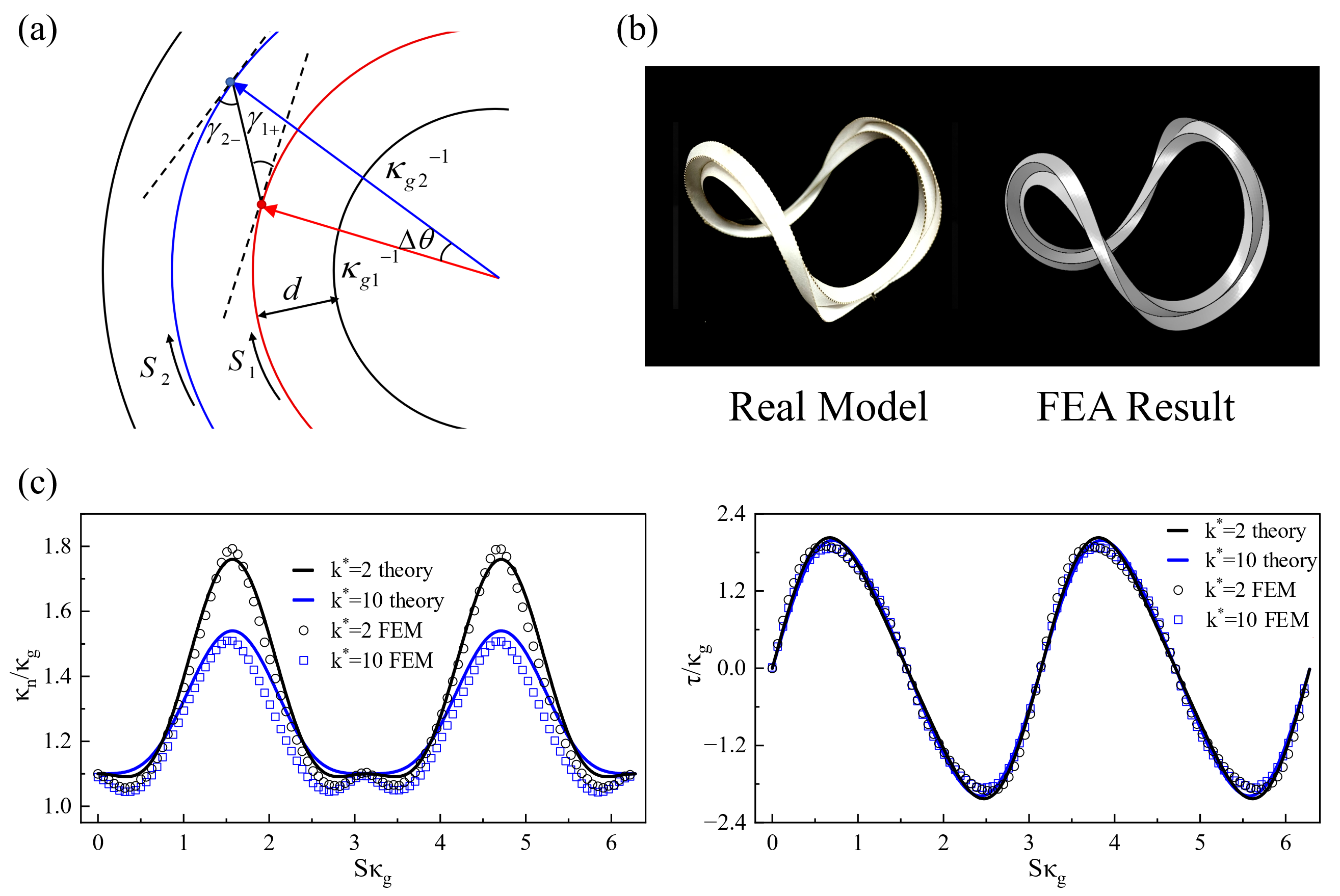}
    \caption{\label{fig:sec24} Deformation of curved origami with annular circular folds: (a) illustrations of geometric variables in the reference shape; (b) deformed configurations achieved in the experiment and finite element analysis; (c) comparison of curvature and torsion of fold between FEA and theoretical results.}
\end{figure}
\section{Single-vertex multi-curve-fold origami theory} \label{sec3}
Among various multi-curve-origami structures, there is a special type with multiple curved folds intersecting at one vertex. Despite a large number of 
single-vertex curved origami, such as those in plants \citep{cheng2023programming} and art design \citep{mitani2019curved}, the theories predicting the deformation still lack. 
In this section, we introduce the extra vertex constraints to the theories in previous sections and develop the theory for single-vertex curved origami, unveiling a universal equilibrium configuration in the region near the vertex. Numerical simulations are performed to validate the theory.

\subsection{Geometric constraints in single-vertex origami} \label{sec3.1}
We start by converting Eq. \eqref{eS} and \eqref{ekt} into explicit forms adaptable to single-vertex curved origami. As shown in Fig. \ref{fig:sec5.22}(a), the correlations between polar coordinates points connected by generators are expressed as
\begin{equation} \label{esanjiao}
\frac{r_1\left(\theta_1\right)}{\sin \left(\gamma_{2-}+\arctan \left(\frac{\d r_2\left(\theta_2\right)}{r_2 \d \theta_2}\right)\right)}=\frac{r_2\left(\theta_2\right)}{\sin \left(\gamma_{1+}+\arctan \left(\frac{\d r_1\left(\theta_1\right)}{r_1 \d \theta_1}\right)\right)}=\frac{v_{0+}}{\sin \left(\gamma_{1+}\right) \sin \left(\alpha+\theta_2-\theta_1\right)},
\end{equation}
where $r(\theta)$ is the crease in polar coordinates. From Eq. \eqref{ek} and \eqref{ekt} revealing geometric correlations between adjacent folds, the crease curvature and torsion of the two points satisfy
\begin{equation} \label{eq049}
\left(1+t_{2-}^2\left(S_2\right)\right) \kappa_{n2-}\left(S_2\right)=\frac{\left(1+t_{1+}^2\left(S_1\right)\right) \kappa_{n1+}\left(S_1\right)}{1+v_{0+}\left(S_1\right) t_{1+}^{\prime}\left(S_1\right)-v_{0+}\left(S_1\right) \kappa_{g1+}\left(S_1\right)-v_{0+}\left(S_1\right) t_{1+}^2\left(S_1\right) \kappa_{g1+}\left(S_1\right)},
\end{equation}
\begin{equation}\label{eq050}
\tau_2(S_2) = -t_{2-}(S_2) \kappa_{n2-}(S_2)-\frac{\kappa_{g2-}(S_2)}{\kappa_{2}^2(S_2)} \kappa_{n2-}^{\prime}(S_2)+\frac{\kappa_{n2-}(S_2)}{\kappa_{2}^2(S_2)} \kappa_{g2-}^{\prime}(S_2),
\end{equation}
where $\d S = \sqrt{r^2+(\frac{\d r}{\d \theta})^2}\d \theta$.
For folds with constant geodesic curvatures, define $\eta=\kappa_{g1}/\kappa_{g2}$, and Eq. \eqref{esanjiao} is explicitly solved by
\begin{equation} \label{e051}
\begin{aligned}
2\theta_{2} & =-\arccos \left(\eta^{-1}(\cos \gamma_{1+}-\cos \left(\gamma_{1+}+2\theta_1\right))+\cos \left(\alpha_1-\gamma_{1+}-2\theta_1\right)\right)-\alpha_1+\gamma_{1+}+2\theta_1, \\
\gamma_{2-} & =\arccos \left(\eta^{-1}(\cos \gamma_{1+}-\cos \left(\gamma_{1+}+2\theta_1\right))+\cos \left(\alpha_1-\gamma_{1+}-2\theta_1\right)\right), \\
v_{0+}& =\frac{2 \eta \sin (\theta_{2}) \sin \left(\alpha_1-\theta_1+\theta_{2}\right) \sin \gamma_{1+}}{\kappa_{g2} \sin \left(\theta_{2}+\gamma_{1+}\right)}.
\end{aligned}
\end{equation}
Considering the N-fold case, the single-vertex curved origami is  constrained by the closeness conditions
\begin{equation}\label{eqclose}
    \begin{aligned}
    \kappa_{N+1}(S)&=\kappa_1(S),\\
    \tau_{N+1}(S)&=\tau_1(S),
    \end{aligned}
\end{equation}
which restrict the creases' configuration spaces. For the axisymmetric deformation, the symmetry condition also needs to be satisfied
\begin{equation} \label{eqperi}
    \begin{aligned}
    \kappa_{i+2}(S)&=\kappa_i(S),\\
    \tau_{i+2}(S)&=\tau_i(S).
    \end{aligned}
\end{equation}
\begin{figure}[t]
    \centering
    \includegraphics[width=1\textwidth]{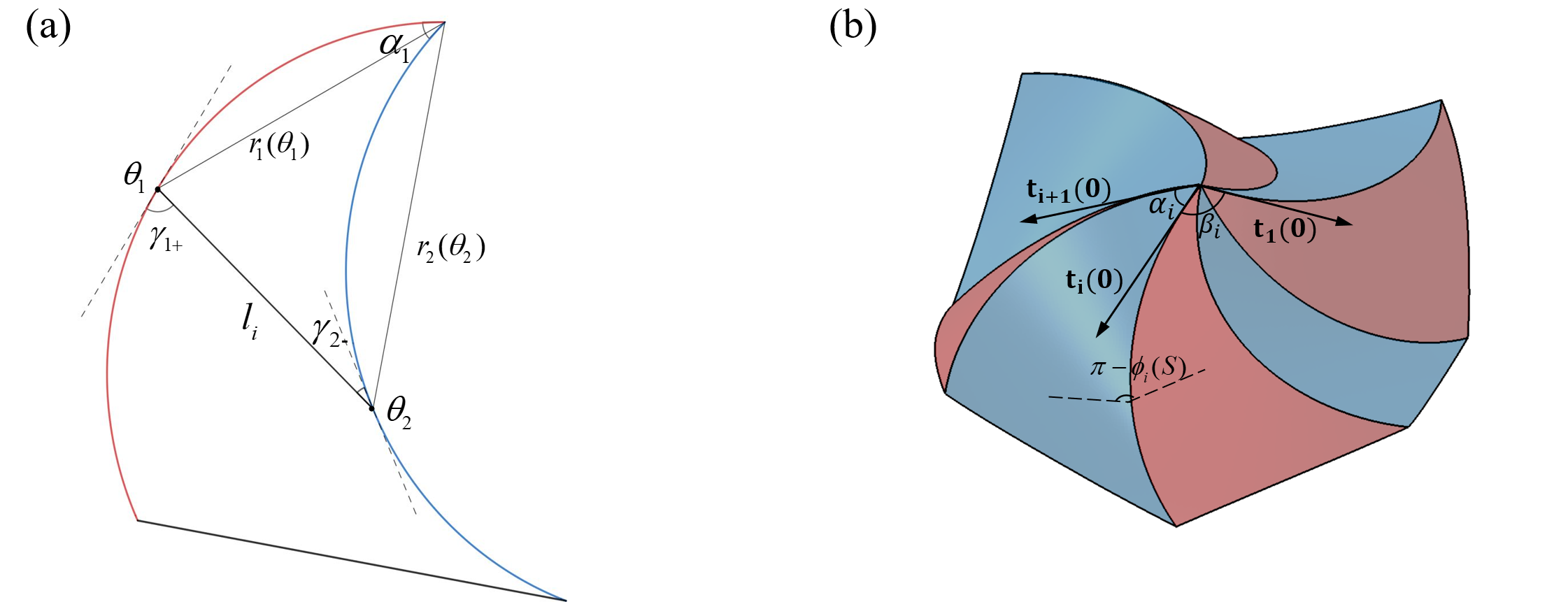}
    \caption{\label{fig:sec5.22} Illustrations for geometric variables: (a) in the panel; (b) at the vertex}
\end{figure}

The tangent vectors of folds intersecting at a vertex lead to extra geometric constraints of the folding angles at $S=0$, as shown in Fig. \ref{fig:sec5.22}(b). We assume that generators cannot intersect at the vertex to form a conical point. Otherwise, $\kappa_m$ reaches infinity as discussed in Section \ref{sec2.1}, and the bending energy diverges. Under the assumption, $\langle \mathbf{t}_i(0),\mathbf{t}_{i+1}(0) \rangle$ stays $\alpha_i$ during isometric deformation. We define $\beta_i=\langle \mathbf{t}_1(0), \mathbf{t}_{i}(0) \rangle$, and according to the previous work on single-vertex origami with straight folds \citep{evans2015lattice}, $\phi_i(0)$, the folding angle of creases at the vertex, are expressed in forms of $\beta_i$ as
\begin{equation} \label{e53}
\begin{aligned}
\phi_1(0)&=\sum_{i=1}^{N-2} \arccos \left(\frac{\cos \alpha_{i+1}-\cos \beta_{i+1} \cos \beta_i}{\sin \beta_{i+1} \sin \beta_i}\right), \\
\phi_2(0)&=\arccos\left(\frac{\cos \beta_2-\cos \alpha_1 \cos \alpha_2}{\sin \alpha_1 \sin \alpha_2}\right), \\
\phi_N(0)&=\arccos\left(\frac{\cos \beta_{N-2}-\cos \alpha_{N-1} \cos \alpha_N}{\sin \alpha_{N-1} \sin \alpha_N}\right),\\
\phi_i(0)&=\arccos\left(\frac{\cos \beta_{i-2}-\cos \alpha_{i-1} \cos \beta_{i-1}}{\sin \beta_{i-1} \sin \alpha_{i-1}}\right)+\arccos \left(\frac{\cos \beta_i-\cos \alpha_i \cos \beta_{i-1}}{\sin \beta_{i-1} \sin \alpha_i}\right).
\end{aligned}
\end{equation}
Therefore $N-3$ $\phi_i(0)$ are independent. Incorporating the rotational symmetry, Eq. \eqref{e53} are simplified as
\begin{equation} \label{eq055}
\begin{aligned}
    [(\sin \frac{\phi_1(0)}{2}\sin\frac{\phi_2(0)}{2})^{-1}-\cot \frac{\phi_1(0)}{2} \cot\frac{\phi_2(0)}{2}]\cos \alpha&=1, \\
\end{aligned}
\end{equation}
which reveals the geometric correlations between the initial folding angles of mountain folds and valley folds. 

\subsection{Near-field deformation} \label{sec3.2}
In Section \ref{sec3.1}, we indicate that the vertex introduces extra periodic constraints (Eq. \eqref{eqclose} and \eqref{eqperi}) and folding angle constraints (Eq. \eqref{e53} and \eqref{eq055}), distinct from multi-curve-fold origami without a vertex. 
Next, we demonstrate that the periodicity at the vertex and the geometric constraints between neighboring creases will bring strong limitations on the configuration space of the origami, leading to distinct behaviors in the \emph{near field} and \emph{far field} defined below.

\textbf{Near field.} The near field is defined as the region near the vertex with limited configuration space. The analytical classification will be given below.

\textbf{Far field.} The far field is the complement of the near field in the origami structure. In this domain, no correlations between creases occur and the configuration space is free.

To study the behaviors in these two domains, we first prove a theorem about the equilibrium configuration in the near field. Then we will derive the exact boundary between these two fields in Section \ref{sec3.3} and study the far-field deformation. 
Suppose that the two generators starting from $S_2$ on crease 2 end at $S_1$ and $S_3$ on crease $1$ and crease $3$, as shown in Fig.~\ref{fig:sec42}(a). The relatively concise theorem on the near-field deformation is based on the following two lemmas. 

\begin{lemma}\label{lemma1}
 If $S_1=S_3$ as illustrated in Fig. \ref{fig:sec42}(a), the curvature and torsion distribution $\kappa_2(S)$ and $\tau_2(S)$ in a finite neighborhood of $S_2$ can be  derived from $\kappa_2(S_2),\tau_2(S_2)$ and $\kappa_1(S_1)$.
\end{lemma}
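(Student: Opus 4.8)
The plan is to parametrize the three creases by arclength near $S_2$ and turn the neighbor-to-neighbor geometric correlations of Section~\ref{sec2.3} into a closed system of ordinary differential equations for $\kappa_2$ and $\tau_2$. Write $S$ for the arclength of crease~$2$ in a neighborhood of $S_2$, and let $S_1=S_1(S)$ and $S_3=S_3(S)$ denote the arclength parameters of the points of crease~$1$ and crease~$3$ joined to $S$ by the two panel generators; the near-field hypothesis $S_1=S_3$ is taken to hold throughout this neighborhood. First I would invoke the axisymmetry constraint Eq.~\eqref{eqperi}, which identifies crease~$3$ with crease~$1$ as functions of arclength, $\kappa_3(\cdot)\equiv\kappa_1(\cdot)$ and $\tau_3(\cdot)\equiv\tau_1(\cdot)$. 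Together with $S_1=S_3$, this says the generator issued from $S$ reaches the \emph{same} curvature and torsion values, $\kappa_1(S_1)$ and $\tau_1(S_1)$, whether it is followed across the panel between creases~$1$ and~$2$ or across the panel between creases~$2$ and~$3$.

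Next I would write down, from the correlation Eq.~\eqref{ekt} in its explicit polar forms Eq.~\eqref{esanjiao}--\eqref{e051} together with the transfer relations Eq.~\eqref{eq049}--\eqref{eq050}, the ``backward'' transfer across the panel between creases~$1$ and~$2$, expressing $\kappa_1(S_1)$, $\tau_1(S_1)$ and $S_1$ in terms of a finite jet of $(\kappa_2,\tau_2)$ at $S$, and the analogous ``forward'' transfer across the panel between creases~$2$ and~$3$ giving $\kappa_3(S_3)$, $\tau_3(S_3)$ and $S_3$. Equating forward and backward images — legitimate by the previous paragraph — yields a finite collection of differential relations among $\kappa_2$, $\tau_2$ and their $S$-derivatives: one from $S_1=S_3$, one from the $\kappa_n$-correspondence Eq.~\eqref{eq049} (equivalently Eq.~\eqref{ek}), and one from the torsion correspondence Eq.~\eqref{eq050}. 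I would then solve these for the highest-order derivatives they contain. The non-degeneracy already established in Section~\ref{sec2.1} makes this possible: since the generators do not intersect inside a panel, the denominators in Eq.~\eqref{e11} and in Eq.~\eqref{ek}, \eqref{eq049} are nonzero, and $\kappa_n$, $\kappa$ stay finite, so the $t_\pm$ of Eq.~\eqref{etclass} are finite and the relevant partial derivatives do not vanish; the implicit function theorem then puts the relations into the normal form of an ODE system for $(\kappa_2,\tau_2)$, possibly together with one algebraic relation that is an invariant of the resulting flow and hence constrains only the value at $S_2$.

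It then remains to count the Cauchy data. The normal-form system needs $\kappa_2,\tau_2$ and a few of their lower derivatives at $S_2$; the derivatives not contained in $\{\kappa_2(S_2),\tau_2(S_2)\}$ should be pinned down by evaluating the transfer relations Eq.~\eqref{eq049}--\eqref{eq050} at $S=S_2$, combined with the normal-form identities, since this links $\kappa_1(S_1)$ to $\kappa_2(S_2)$, $\tau_2(S_2)$ and exactly those missing quantities and can be inverted for them. Prescribing the three scalars $\kappa_2(S_2)$, $\tau_2(S_2)$, $\kappa_1(S_1)$ then fixes the full Cauchy data, and Picard--Lindel\"{o}f (or Cauchy--Kovalevskaya in the analytic category) yields a unique pair $\kappa_2(S),\tau_2(S)$ on a finite neighborhood of $S_2$; the transfer maps then also return $\kappa_1,\tau_1$ on the linked subinterval, completing the local description.

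The step I expect to be the real obstacle is the middle one: tracking the differential orders carried by Eq.~\eqref{eq049}--\eqref{eq050} through the forward and backward compositions so that, after eliminating the top $\kappa_2$-derivative using the differentiated parameter-matching relation, the system collapses to exactly the right normal form and the free data reduce to precisely the three scalars in the statement — neither over- nor under-determined. A related check is that the forward and backward transfers are genuinely distinct maps, so that equating them is a real constraint rather than a tautology; this uses that crease~$2$ and its two neighbors carry different folding angles at the vertex (a mountain between valleys), i.e.\ $\phi_1(0)\neq\phi_2(0)$ in Eq.~\eqref{eq055}, which breaks the reflection that would otherwise force the two panel transfers to coincide.
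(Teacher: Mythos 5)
Your overall strategy---combine the periodic identification of creases 1 and 3 with the panel transfer relations \eqref{eq049}--\eqref{eq050} and the coincidence $S_1=S_3$, and show that the three scalars $\kappa_2(S_2),\tau_2(S_2),\kappa_1(S_1)$ determine $\kappa_2,\tau_2$ locally---is the paper's strategy. But there is a genuine gap in how you close the system: you promote the hypothesis $S_1=S_3$ from a condition at the single point $S_2$ to an identity $S_1(S)\equiv S_3(S)$ on a whole neighborhood, and your entire reduction to an ODE system in normal form (hence the Picard--Lindel\"{o}f step) rests on that promotion. Without it, the pointwise equations ``forward transfer $=$ backward transfer'' at $S\neq S_2$ are not available at all, since $\kappa_3(S_3(S))=\kappa_1(S_3(S))$ and $\kappa_1(S_1(S))$ are then values of the same function at two \emph{different} arguments. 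The lemma's hypothesis, and every place it is invoked, is pointwise: in Theorem \ref{thm} it is applied at the vertex, where $v_{0\pm}=0$ forces $S_1=S_3$ only at $S_2=0$, and in Lemma \ref{lemma2} at the limiting point where $S_{n+2}-S_n\to 0$; away from such points the two transfer parameters generically differ, because the two panels adjacent to crease 2 carry different $\gamma_{2\pm}$ (compare Situation B in Section \ref{sec3.3}, where $S_1(S_0)>S_0>S_3(S_0)$). So an argument conditioned on $S_1\equiv S_3$ near $S_2$ does not establish the statement in the form in which it is actually used.

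The paper's proof never leaves the point $S_2$: it uses the function-level periodicity $\kappa_1\equiv\kappa_3$, $\tau_1\equiv\tau_3$ to equate \emph{all derivatives} of the two compositions $\kappa_1(S_1(S))$ and $\kappa_1(S_3(S))$ (and their torsion analogues) at $S_2$---these are the constraints $h_{\kappa n}=h_{\tau n}=0$ of Eq. \eqref{etrans}, which involve $S_1^{(k)}$ and $S_3^{(k)}$ separately through the chain rule---and then runs a triangular recursion: $S_1=S_3$ at $S_2$ gives $\kappa_2'(S_2)$; $h_{\kappa 0}=0$ together with the prescribed $\kappa_1(S_1)$ gives $\kappa_2''(S_2)$ and $\tau_2'(S_2)$; and $h_{\tau n}=0$, $h_{\kappa(n+1)}=0$ give $\kappa_2^{(n+3)}(S_2)$ and $\tau_2^{(n+2)}(S_2)$, after which the local distribution is recovered from the Taylor expansion \eqref{e056}. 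This jet recursion is precisely the bookkeeping you flag as ``the real obstacle'' and leave unresolved (``a few of their lower derivatives'', ``possibly together with one algebraic relation''); it is where the proof actually lives. To repair your proposal you would either need to justify $S_1\equiv S_3$ on a neighborhood (which fails in the intended applications) or replace the ODE/Picard--Lindel\"{o}f reduction by the pointwise derivative hierarchy at $S_2$, i.e.\ essentially reproduce the paper's argument.
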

\begin{proof}
 To guarantee the periodicity constraints, $\kappa_1^{(n)}(S)=\kappa_3^{(n)}(S)$ and $\tau_1^{(n)}(S)=\tau_3^{(n)}(S)$  are satisfied everywhere on crease $1$ and crease 3 with every $n\in \mathbb{Z}$. According to Eq. \eqref{esanjiao} - \eqref{eq050}, the n-th order periodicity constraints can be written as
\begin{equation}\label{etrans}
\begin{aligned}
    h_{\kappa n}(\kappa_2,\kappa_2^{\prime},...,\kappa_2^{(n+2)},\tau_2,\tau_2^{\prime},...,\tau_2^{(n+1)},S_1,S_1^{\prime},...,S_1^{(n)},S_3,S_3^{\prime},...,S_3^{(n)};S_2)&=0,\\
     h_{\tau n}(\kappa_2,\kappa_2^{\prime},...,\kappa_2^{(n+3)},\tau_2,\tau_2^{\prime},...,\tau_2^{(n+2)},S_1,S_1^{\prime},...,S_1^{(n)},S_3,S_3^{\prime},...,S_3^{(n)};S_2)&=0,
\end{aligned}
\end{equation}
where $S_{1,3}^{(n)}=S_{1,3}^{(n)}(\kappa_2,\kappa_2^{\prime},...,\kappa_2^{(n+1)},\tau_2,\tau_2^{\prime},...\tau_2^{(n)};S_2)$. 

Given the initial values $\kappa_2(S_2), \tau_2(S_2)$ and $\kappa_1(S_1)$, the n-th order derivatives $\lbrace\kappa_{2}^{(n)}(S_2)\rbrace$ and $\lbrace\tau_{2}^{(n)}(S_2)\rbrace$ can be solved recursively according to Eq. \eqref{etrans} in the following procedure.
\begin{enumerate}
    \item Since $S_1$ and $S_3$ are determined by $\kappa_2(S_2)$, $\kappa_2^{\prime}(S_2)$ and $\tau_2(S_2)$ (see the equation below Eq. \eqref{etrans}), solving the assumption $S_1=S_3$ yields $\kappa_2^{\prime}(S_2)$ as a function of $\kappa_2(S_2)$ and $\tau_2(S_2)$. 
    \item $\kappa_1(S_1)$ is determined by $\kappa_2(S_2),\kappa_2^{\prime}(S_2),\kappa_2^{\prime\prime}(S_2),\tau_2(S_2)$ and $\tau_2^{\prime}(S_2)$ according to Eq. \eqref{ekt}.
   $h_{\kappa0}=0$ is also an equation for $\kappa_2^{\prime\prime}(S_2)$ and $\tau_2^{\prime}(S_2)$. Solving these two equations yields $\kappa_2^{\prime\prime}(S_2)$ and $\tau_2^{\prime}(S_2)$ as functions of the initial values.
   \item The two equations $h_{\tau0}=0$ and $h_{\kappa1}=0$ give $\kappa_2^{\prime\prime\prime}(S_2)$ and $\tau_2^{\prime\prime}(S_2)$. Then the two equations $h_{\tau1}=0$ and $h_{\kappa2}=0$ give $\kappa_2^{(4)}(S_2)$ and $\tau_2^{(3)}(S_2)$. 
   \item Recursively, $\kappa_2^{(n+3)}(S_2)$ and $\tau_2^{(n+2)}(S_2)$ can be derived from $h_{\tau n}=0$ and $h_{\kappa(n+1)}=0$ by repeating the previous steps.
 \end{enumerate}

The curvature and torsion on the neighboring points are then expressed as
\begin{equation} \label{e056}
\begin{gathered}
\kappa_2\left(S_2+\xi\right)=\sum_{i=0}^n \frac{1}{i !} \kappa_2^{(i)}\left(S_2\right) \xi^i+o\left(\xi^n\right), \\
\tau_2\left(S_2+\xi\right)=\sum_{i=0}^n \frac{1}{i !} \tau_2^{(i)}\left(S_2\right) \xi^i+o\left(\xi^n\right).
\end{gathered}
\end{equation}
Notice that as $n\to \infty$, $o(\xi^n)\to 0$ with finite $\xi$. Thus Lemma $1$ is proved. Therefore, the geometry of point $S_2$ determines the deformation of a finite domain, marked in Fig. \ref{fig:sec42}.
\end{proof}

\begin{figure}[t]
    \centering
    \includegraphics[width=1\textwidth]{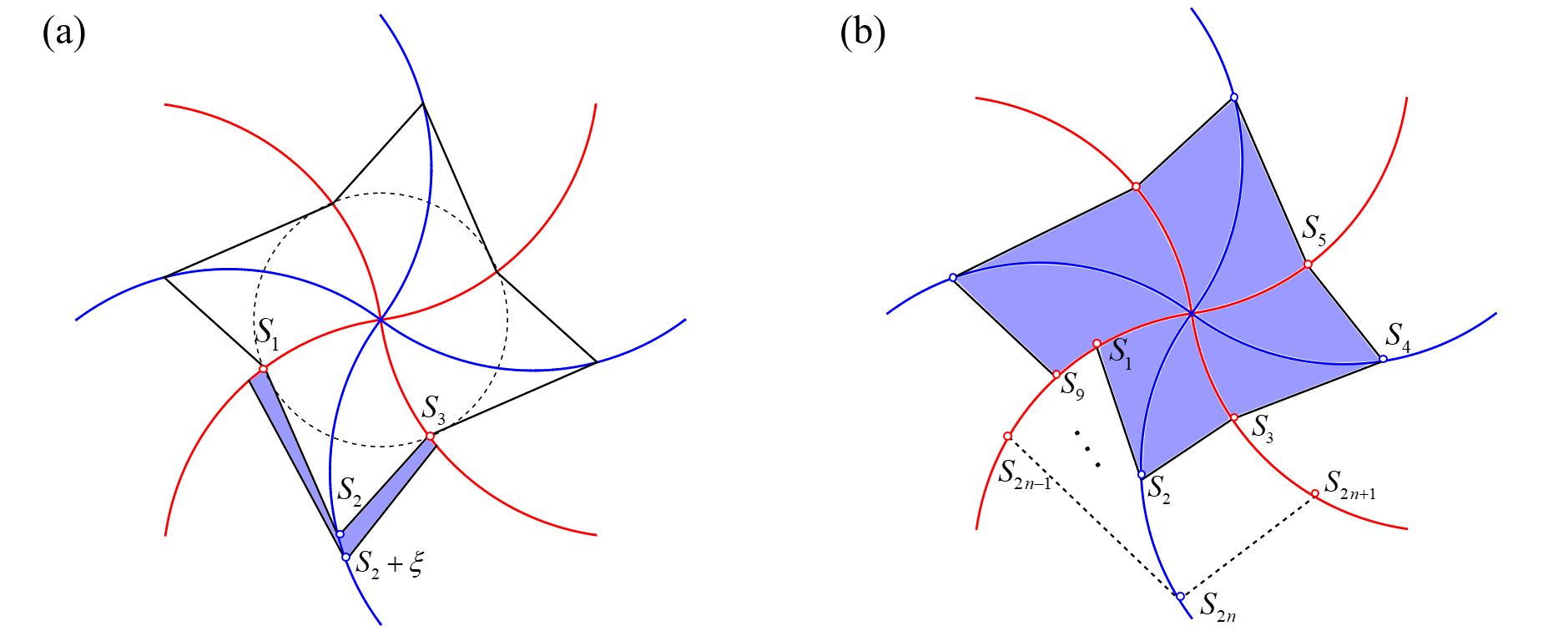}
    \caption{\label{fig:sec42} Two different generator distributions discussed in the proof of (a) Lemma $1$; (b) Lemma $2$. The blue zone represents the geometry-determined domain, i.e., the near field.}
\end{figure}

\begin{lemma}\label{lemma2}
    If $S_1 \neq S_3$ as illustrated in Fig. \ref{fig:sec42}(b), the distribution of $\kappa_2$ and $\tau_2$ in the whole near field can be derived from the distribution of $\kappa_2$ and $\tau_2$ on $(0,S_2)$.
\end{lemma}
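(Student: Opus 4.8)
The plan is a \emph{radial bootstrap}: starting from $\kappa_2,\tau_2$ on the single radial segment $(0,S_2)$, I would alternately invoke the generator correlations of Eq.~\eqref{ekt} (equivalently Eq.~\eqref{eq049}--\eqref{eq050}) and the periodicity constraint Eq.~\eqref{eqperi}, each round \emph{strictly enlarging} the arclength interval on which $\kappa_2,\tau_2$ is pinned down, until the whole near field is exhausted. The reason interval data is needed here (rather than just the point $S_2$, as in Lemma~\ref{lemma1}) is that when $S_1\neq S_3$ there is no coincidence equation at the base point: Eq.~\eqref{ekt} transfers data from crease~$2$ to crease~$1$ only pointwise along the generator family, so one must differentiate $\kappa_2,\tau_2$ along an interval to feed it.

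First I would fix notation. Assume without loss of generality $S_1<S_3$ (the case $S_1>S_3$ follows by relabelling creases $1$ and $3$, and $S_1=S_3$ is precisely Lemma~\ref{lemma1}). Let $\sigma_{+}(S)$ and $\sigma_{-}(S)$ be the generator landing maps carrying a point $S$ on crease~$2$ to the linked point on crease~$1$ (through panel $(1,2)$) and on crease~$3$ (through panel $(2,3)$); by the relation below Eq.~\eqref{etrans} (or Eq.~\eqref{e051} when $\kappa_g$ is constant) these maps are continuous and strictly monotone with $\sigma_\pm(0)=0$, and non-intersection of generators within a single panel makes each $\sigma_\pm$ invertible onto its image. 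Periodicity identifies the geometry of crease~$1$ with that of crease~$3$, so $\kappa_1\equiv\kappa_3$ and $\tau_1\equiv\tau_3$.

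The core is a single bootstrap step, which I would then iterate. Suppose $\kappa_2,\tau_2$ are known on $(0,a)$ for some $a\ge S_2$ inside the near field; then $\kappa_2',\kappa_2'',\tau_2',\tau_2''$ are known there too, so Eq.~\eqref{ekt} with crease~$2$ as base curve determines $\kappa_1,\tau_1$ on $(0,\sigma_{+}(a))$ and $\kappa_3,\tau_3$ on $(0,\sigma_{-}(a))$. Since $S_1<S_3$ forces $\sigma_{+}(a)<\sigma_{-}(a)$, the identification $\kappa_1\equiv\kappa_3$ upgrades the crease~$1$ data to $(0,\sigma_{-}(a))$. Now I would apply Eq.~\eqref{ekt} a second time, with crease~$1$ as base curve and crease~$2$ as the neighbouring curve across the same panel $(1,2)$: the crease~$1$ data on $(0,\sigma_{-}(a))$ determines $\kappa_2,\tau_2$ on $(0,\sigma_{+}^{-1}(\sigma_{-}(a)))$, and $\sigma_{+}^{-1}(\sigma_{-}(a))>\sigma_{+}^{-1}(\sigma_{+}(a))=a$, so the determined interval grows strictly. (One may also carry crease~$1$ data across panel $(0,1)$ onto crease~$0\equiv$ crease~$2$ for extra growth, but this single chain already suffices.) Iterating from $a_0=S_2$ produces an increasing sequence $a_k$; its limit $a_\infty$ cannot be an interior point of the near field, since continuity of $\sigma_\pm$ would let one further step overshoot $a_\infty$, a contradiction. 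Hence $(0,a_\infty)$ is the whole near field. Finally, periodicity and closeness (Eqs.~\eqref{eqperi} and \eqref{eqclose}) carry the determined $\kappa_2,\tau_2$ to every even-indexed crease, and the intermediate crease~$1$ data obtained above fixes every odd-indexed one, so the entire near-field configuration is determined by the data on $(0,S_2)$.

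The hard part will be identifying $a_\infty$ with the \emph{analytic} near-field boundary that Section~\ref{sec3.3} derives, and ruling out premature stalling of the iteration along the way: one must verify that on $(S_2,a_\infty)$ the correlation maps stay invertible (no $\kappa\to0$, no $\sin\gamma\to0$, and no emergence of a plane- or edge-domain that would take a generator off the adjacent crease), and that $\sigma_{+}\neq\sigma_{-}$ there, invoking Lemma~\ref{lemma1} locally at any exceptional point where $\sigma_{+}=\sigma_{-}$. The $\pm$/panel bookkeeping and the one-sided regularity at the moving interval endpoints are routine but must be tracked carefully.
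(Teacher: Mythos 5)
Your overall route --- alternately propagating data across a panel via Eq.~\eqref{eq049}--\eqref{eq050} and identifying creases via the periodicity Eq.~\eqref{eqperi} so that the determined arclength interval grows, with Lemma~\ref{lemma1} held in reserve for points where the two generators coincide --- is essentially the paper's proof (the paper phrases the same bootstrap as ``odd folds determine even folds'' with bounds $S_2,S_3,S_4,\dots$, rather than composing the landing maps to stay on crease~2, but these are the same iteration). The genuine problem is the step you present as decisive: the claim that the limit $a_\infty$ of the increasing sequence ``cannot be an interior point of the near field, since continuity of $\sigma_\pm$ would let one further step overshoot $a_\infty$.'' Continuity gives the opposite conclusion: passing to the limit in $a_{k+1}=\sigma_+^{-1}(\sigma_-(a_k))$ shows that $a_\infty$ is a fixed point of $\sigma_+^{-1}\circ\sigma_-$, i.e.\ $\sigma_+(a_\infty)=\sigma_-(a_\infty)$, so one further step from $a_\infty$ returns exactly $a_\infty$ --- no overshoot and no contradiction. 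Interior stalling is therefore possible, and the paper's proof treats it as the expected outcome rather than a pathology: the induction ``continues until $\lim_{n\to\infty}(S_{n+2}-S_n)=0$,'' and at that coincidence point Lemma~\ref{lemma1} is invoked, whose whole purpose is that pointwise data there determine $\kappa_2,\tau_2$ on a \emph{finite} neighborhood, letting the bootstrap jump past the stall and resume until a generator reaches the boundary of the sheet.

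You do gesture at exactly this remedy in your closing paragraph (``invoking Lemma~\ref{lemma1} locally at any exceptional point where $\sigma_+=\sigma_-$''), but there it is framed as a peripheral verification, whereas it is the ingredient that must replace your faulty contradiction; without it your argument only determines $(0,a_\infty)$ for the first stall point, not the whole near field. A smaller discrepancy: the paper does not need to match $a_\infty$ against a separately derived ``analytic'' near-field boundary --- the near field is, by construction, the region swept out by this procedure, and the termination criterion is simply that a generator ends on an edge of the structure (Section~\ref{sec3.3} then classifies the two resulting generator layouts). With the stall-plus-Lemma~\ref{lemma1} mechanism moved into the core of the argument, your proposal coincides with the paper's proof.
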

\begin{proof}
    We assume that $S_3>S_1$. As shown in Fig. \ref{fig:sec42}(b), if the geometry on $(0,S_2)$ of crease 2 is given, the geometry on $(0,S_3)$ of crease 3 is determined according to Eq. \eqref{eq049} and \eqref{eq050}. The geometry on $(0,S_3)$ of odd folds then determines the geometry on $(0,S_4)$ of even folds. The induction will extend the geometry-determined domain, which continues until $\lim _{n \rightarrow \infty} (S_{n+2}-S_{n})=0$. Then we use Lemma \ref{lemma1} to extend the geometry-determined region, and the procedure continues until a generator reaches the boundary of the structure.
\end{proof}
Based on the above two lemmas, the theorem of the near-field deformation can be proved as follows.
\begin{thm} \label{thm}
     The near-field deformation of a single-vertex curve-fold origami only depends on the curvature at the vertex.
\end{thm}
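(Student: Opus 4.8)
The plan is to combine Lemma~\ref{lemma1} and Lemma~\ref{lemma2} with the vertex constraints from Section~\ref{sec3.1} to show that the only free datum in the near field is the curvature at the vertex. First I would set up the bookkeeping at the vertex: by the periodicity (Eq.~\eqref{eqperi}) and closeness (Eq.~\eqref{eqclose}) conditions, all odd-indexed creases share the same curvature/torsion distribution and all even-indexed creases share theirs, so it suffices to determine $\kappa_2(S)$ and $\tau_2(S)$ on an initial segment. The folding-angle relations (Eq.~\eqref{e53}, and under rotational symmetry Eq.~\eqref{eq055}) pin down $\phi_i(0)$ in terms of a single folding angle at the vertex, and since $\phi$ depends only on $\kappa$ via Eq.~\eqref{e14}, this is equivalent to fixing $\kappa_2(0)$ (equivalently $\kappa_1(0)=\kappa_3(0)$) — this is the ``curvature at the vertex'' referred to in the statement.

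Next I would run the two-lemma machinery starting from $S_2=0$. The generator through the vertex point $S_2=0$ lands at $S_1=0$ and $S_3=0$ on the neighbouring creases (all creases pass through the vertex), so we are exactly in the hypothesis $S_1=S_3$ of Lemma~\ref{lemma1}, with $S_2=0$. Lemma~\ref{lemma1} then says that $\kappa_2(S),\tau_2(S)$ in a finite neighbourhood of $0$ are determined by $\kappa_2(0),\tau_2(0)$ and $\kappa_1(0)$. But at the vertex the symmetry forces $\kappa_1(0)=\kappa_2(0)$ (odd and even creases meet at the same point with a geometrically fixed tangent configuration), so the data reduce to $\kappa_2(0)$ and $\tau_2(0)$. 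The remaining job is to eliminate $\tau_2(0)$: I would argue that the vertex is a smooth regular point of each crease in the \emph{reference} flat sheet, so each deformed crease is smooth through $S=0$, and the axisymmetry (Eq.~\eqref{eqperi}) plus the reflection symmetry of the mountain/valley pattern forces the torsion to vanish at the vertex, $\tau_i(0)=0$ — a space curve invariant under the symmetry of the star of creases cannot have nonzero torsion at the fixed point. Hence the near-field solution near the vertex is a function of $\kappa_2(0)$ alone.

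Finally I would propagate this outward. Having fixed $\kappa_2(S),\tau_2(S)$ on a finite neighbourhood $(0,\epsilon)$ of the vertex purely from $\kappa_2(0)$, Lemma~\ref{lemma2} (applied in the $S_1\neq S_3$ regime that holds once we move off the vertex) extends the geometry-determined region: the data on $(0,S_2)$ of crease~2 determine the data on $(0,S_3)$ of crease~3, and so on by the induction in Lemma~\ref{lemma2}, interleaved with further applications of Lemma~\ref{lemma1} at limiting points, until a generator hits the structure's boundary — which is precisely the definition of the near field given just before the theorem. So the entire near-field configuration is a function of $\kappa_2(0)$, i.e. of the curvature at the vertex, with no dependence on the panel/fold stiffnesses or the geodesic curvatures. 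This completes the proof modulo the two lemmas.

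The main obstacle I anticipate is the $\tau_2(0)=0$ step and, more generally, carefully matching the ``initial data'' demanded by Lemma~\ref{lemma1} (three numbers: $\kappa_2(S_2),\tau_2(S_2),\kappa_1(S_1)$) to what the vertex geometry actually fixes (one number: the folding angle / curvature at the vertex). One must check that the folding-angle constraints Eq.~\eqref{e53}/\eqref{eq055} together with the no-conical-point assumption and the periodicity are enough to kill both $\tau_2(0)$ and the redundancy between $\kappa_1(0)$ and $\kappa_2(0)$; a soft symmetry argument should suffice, but the cleanest route is to verify that the recursion in Lemma~\ref{lemma1}, when seeded at $S_2=0$ with the vertex constraints imposed, has all its free parameters collapse onto $\kappa_2(0)$. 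A secondary subtlety is ensuring the finite neighbourhoods from successive applications of Lemma~\ref{lemma1} (and the inductive steps of Lemma~\ref{lemma2}) patch together to cover the whole near field rather than shrinking to a point; this is exactly the convergence claim $\lim_{n\to\infty}(S_{n+2}-S_n)=0$ invoked in Lemma~\ref{lemma2}, which I would take as given from that lemma's proof.
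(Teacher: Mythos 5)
Your skeleton matches the paper's (vertex constraints, then Lemma~\ref{lemma1} seeded at $S_2=0$, then Lemma~\ref{lemma2} to propagate outward until a generator hits the boundary), but the step that actually makes the theorem work — collapsing the initial data of Lemma~\ref{lemma1} from $(\kappa_2(0),\tau_2(0),\kappa_1(0))$ down to the single number $\kappa_2(0)$ — is handled by an argument that does not hold. You eliminate $\tau_2(0)$ by claiming that the symmetry of the star of creases forces $\tau_i(0)=0$. No element of the symmetry group fixes an individual crease: the rotational symmetry behind Eq.~\eqref{eqperi} maps crease $i$ to crease $i+2$, and for a generic pattern (arcs all curving the same way, or distinct neighboring geodesic curvatures as in Section~4.4) there is no reflection fixing crease $2$, so there is no invariant space curve to which a fixed-point torsion argument could apply. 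Nor does the theorem need $\tau(0)=0$: in the paper's proof $\tau_2(0)$ is generally nonzero and is \emph{solved for}, not assumed away. The actual mechanism is that at the vertex the generator projection length vanishes, $v_{0\pm}=0$, which lowers the differential order of the compatibility constraints — Eq.~\eqref{etrans} degenerates to Eq.~\eqref{etrans2} — so that the recursion of Lemma~\ref{lemma1}, seeded at $S_2=0$ with $\kappa_1(0),\kappa_3(0)$ supplied by Eq.~\eqref{eq055} and \eqref{e14}, determines $\tau_2(0)$ and all higher coefficients from $\kappa_2(0)$ alone. Without this order-reduction observation (or a correct substitute), your proof leaves $\tau_2(0)$ as a free parameter and the statement is not established.

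A secondary error: you assert that the vertex symmetry forces $\kappa_1(0)=\kappa_2(0)$. Creases $1$ and $2$ are a mountain/valley pair; their folding angles are related by the nontrivial compatibility relation Eq.~\eqref{eq055}, not equal, and via Eq.~\eqref{e14} their curvatures generally differ (certainly when $\kappa_{g1}\neq\kappa_{g2}$). What is true, and all that is needed, is that $\kappa_1(0)$ and $\kappa_3(0)$ are \emph{determined by} $\kappa_2(0)$ through Eq.~\eqref{eq055}; your earlier sentence says this correctly, so the fix is local, but as written the equality is false. The remaining parts of your outline — periodicity reducing the problem to one even and one odd crease, and the Lemma~\ref{lemma2} induction (including the limit $S_{n+2}-S_n\to 0$) extending the geometry-determined region to the whole near field — coincide with the paper's argument and are fine.
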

\begin{proof}
From Eq. \eqref{eq055}, we may derive ${\kappa_1(0)}$ and ${\kappa_3(0)}$ when $\kappa_2(0)$ is given. Since the projection length of generators $v_{0\pm}=0$ at the vertex, the geometric constraints Eq. \eqref{etrans} in Lemma \ref{lemma1} are simplified to
\begin{equation}\label{etrans2}
\begin{aligned}
    &h_{\kappa n}(\kappa_2,\kappa_2^{\prime},...,\kappa_2^{(n+1)},\tau_2,\tau_2^{\prime},...,\tau_2^{(n)},S_1,S_1^{\prime},...,S_1^{(n)},S_3,S_3^{\prime},...,S_3^{(n)};S_2)=0,\\
     &h_{\tau n}(\kappa_2,\kappa_2^{\prime},...,\kappa_2^{(n+2)},\tau_2,\tau_2^{\prime},...,\tau_2^{(n+1)},S_1,S_1^{\prime},...,S_1^{(n)},S_3,S_3^{\prime},...,S_3^{(n)};S_2)=0,
\end{aligned}
\end{equation}
when $S_2=0$. Therefore, as a conclusion of Lemma \ref{lemma1}, the distribution of $\kappa_2$ and $\tau_2$ on a finite range $(0,S_2)$ is determined by $\kappa_2(0)$. As a result of Lemma \ref{lemma2}, the entire near-field deformation can be derived, which is related to $\kappa_2(0)$.
\end{proof}

The theorem indicates that, given the curvature at the vertex, the near-field deformation is independent of other mechanical properties. Moreover, the theorem implies a way of describing the near-field deformation concisely. Specifically, the geometry of the curve-fold origami can be approximately demonstrated by
\begin{equation} 
\begin{gathered}
\kappa\left(S\right)=\sum_{i=0}^n \frac{1}{i !} \kappa^{(i)}\left(0\right) S^i, \\
\tau\left(S\right)=\sum_{i=0}^n \frac{1}{i !} \tau^{(i)}\left(0\right) S^i.
\end{gathered}
\end{equation}
For folds with constant geodesic curvature, the coefficients of the series are easily derived via substituting Eq. \eqref{eq049} - \eqref{e051} into Eq. \eqref{etrans2}.

The near-field theory can be easily generalized to origami with arbitrary curved creases $r_1(\theta)$, not limited to circular arcs. Specifically, Eq. \eqref{esanjiao} is generalized to
\begin{equation}\label{noncircfold}
\sin \left(\gamma_{1+}+\arctan \left(\frac{d r_1\left(\theta_1\right)}{r_1 d \theta_1}\right)-\left(\alpha+\theta_2-\theta_1\right)\right) r_2\left(\theta_2\right)=\sin \left(\gamma_{1+}+\arctan \left(\frac{d r_1\left(\theta_1\right)}{r_1 \d \theta_1}\right)\right) r_1\left(\theta_1\right),
\end{equation}
where $\theta_1$ and $\theta_2$ are the corresponding parameters at the generator connecting two neighboring creases. We could then
compute $\theta_2^{(n)}(\theta_1)$ at the vertex from Eq. \eqref{noncircfold}. The coefficients of the series are thus derived from Eq. \eqref{etrans2} to obtain the deformation of the full domain. Furthermore,
the symmetric deformation condition Eq. \eqref{eqperi} can also be relaxed. Notice that the closeness condition Eq. \eqref{eqclose} will bring intrinsic periodicity. We may follow similar steps to prove the near-field theory. However, in the asymmetric case, $\kappa_n(0)$ is not sufficient. Instead, the higher order terms $\kappa_n'(0), \kappa_n''(0) \dots$ are needed.

\subsection{Far-field deformation} \label{sec3.3}
In Lemma \ref{lemma2}, we show that the deformation is determined by the near-field theory until a generator ends at an edge. In other words, the generator distribution determines the boundary between the near-field domain and the far-field domain. Figure \ref{fig:sec43} shows two possible generator distributions that are distinguished by the relationships between $S_1(S_0), S_3(S_0)$ and $S_0$. 

\begin{figure}[b!]
    \centering
\includegraphics[width=0.95\textwidth]{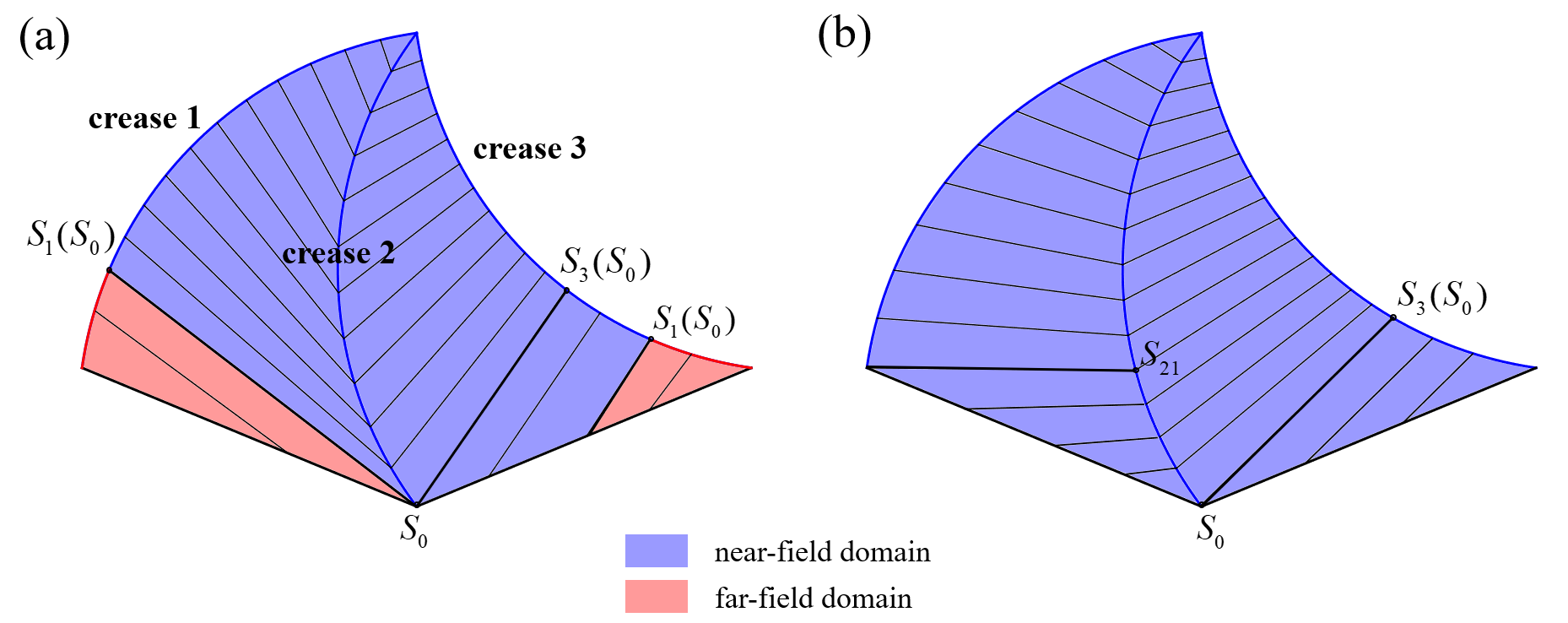}
    \caption{\label{fig:sec43} Two possible near-field domain and far-field domain distributions based on the generator distribution: (a) crease $2$ and blue regions are determined by near-field theory; (b) all folds and whole panels are determined by near-field theory.}
\end{figure}
\textbf{Situation A.} As shown in Fig. \ref{fig:sec43}(a), we have $S_1(S_0)<S_0$ and $S_3(S_0)<S_0$. Therefore, all generators starting from crease $2$ (defined in Fig. \ref{fig:sec43}(a)) on both panels end at the neighboring creases, leading to the conclusion that segment $(0,S_0)$ on crease 2 and segments $(0,S_1(S_0))$ on crease $1$ and 3 are determined by the near-field theory.  The near-field and far-filed domains can be identified. 

The deformation for Situation A can be derived as follows. Given $\kappa_n(0)$, the boundary $S_1(S_0)$ and the corresponding values $\kappa_n(S_1),\kappa_n^{\prime}(S_1),\tau(S_1)$  are determined by the near-field theory. The deformation of the far-field region is then solved via energy minimization.  For the systems without geometrical loads, the far-field regions deform freely, following Eq. \eqref{e26} and the initial values $\kappa_n(S_1),\kappa_n^{\prime}(S_1),\tau(S_1)$. For systems with geometrical loads, $\mathbf{r_c}(S_1)$ is determined by the near-field theory, and the far-field deformation is thus given by Eq. \eqref{e38}.

\textbf{Situation B.} As shown in Fig. \ref{fig:sec43}(b), we have $S_1(S_0)>S_0$ and $S_3(S_0)<S_0$. Assuming that $S_1(S_{21})=S_0$, the segment $(0,S_{21})$ on crease $2$ satisfies the near-field theory. The segments $(0,S_0)$ on crease $1$ and $3$ are then determined, leading to the conclusion that the whole structure is determined by the near-field theory. 

With both near-field and far-field theory, the deformation of single-vertex curved origami is derived.
As for the situation with $S_1(S_0)>S_0$ and $S_3(S_0)>S_0$, it is included in situation A by selecting crease $1$ or crease $3$ as the new crease $2$. Situations with $S_1(S_0)<S_0$ and $S_3(S_0)>S_0$ are included in situation B due to symmetry. Therefore, the two cases include all possibilities. As a conclusion, If generators distribute continuously, at least one fold (crease $2$) is completely in the near-field domain.

\subsection{Finite element modeling and numerical results of single-vertex curved origami}
Numerical simulations of single-vertex curved origami are employed in this section to validate the single-vertex curved origami theory.
\subsubsection{Finite element modeling of curve-fold origami}\label{sec44}
In finite element analysis, the origami panels are modeled as shells, and the elastic folds are modeled as discrete torsional springs \citep{FEM,water}. In the following analysis, we generate finite element models using the commercial software Abaqus. The origami panels are modeled as S4R shell elements. The curved folds are modeled as join-rotation connectors with stiffness $k_{1}=k_cS_0/N$ along the tangent of the folds, where $S_0$ is the fold length and $N$ is the number of nodes on each fold. The rest angle of curved folds is implemented by adding a connector moment of $m_{1}=\pm k_1\phi_0$ to the connectors on mountain and valley folds.
We then simulate the deformation of the single-vertex curved origami with different crease patterns. The qualitative comparison between FEA results and real models is shown in Fig. \ref{fig:sec51}, illustrating good agreement. $R$ represents the distance between the starting and ending points of the creases.

\begin{figure}[b!]
    \centering
    \includegraphics[width=0.6\textwidth]{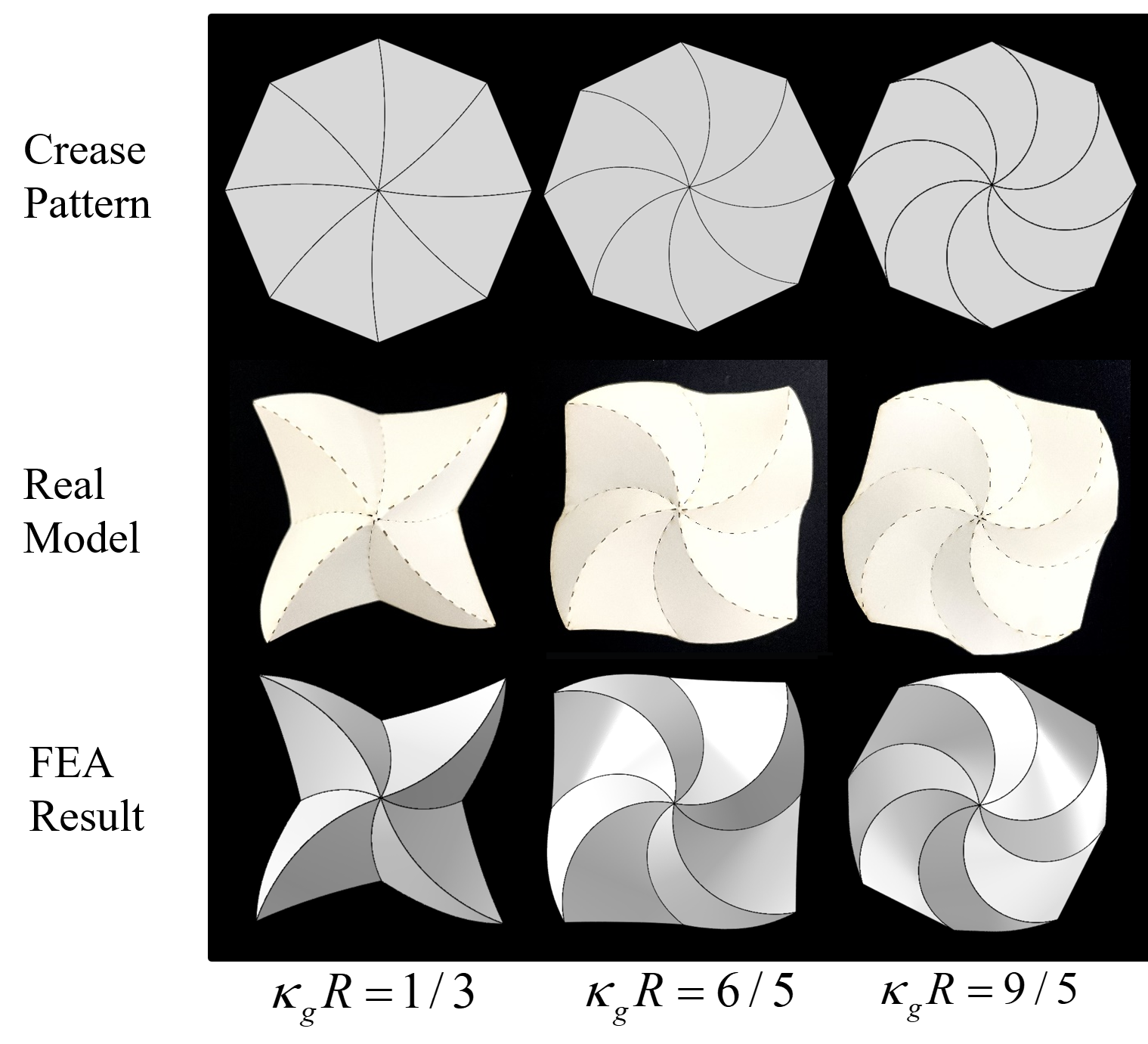}
    \caption{\label{fig:sec51}Deformed and reference configurations of the single-vertex curved origami with folds of different curvatures. }
\end{figure}
\subsubsection{Comparison between theoretical solutions and numerical results}
\begin{figure}[b!]
    \centering
    \includegraphics[width=1\textwidth]{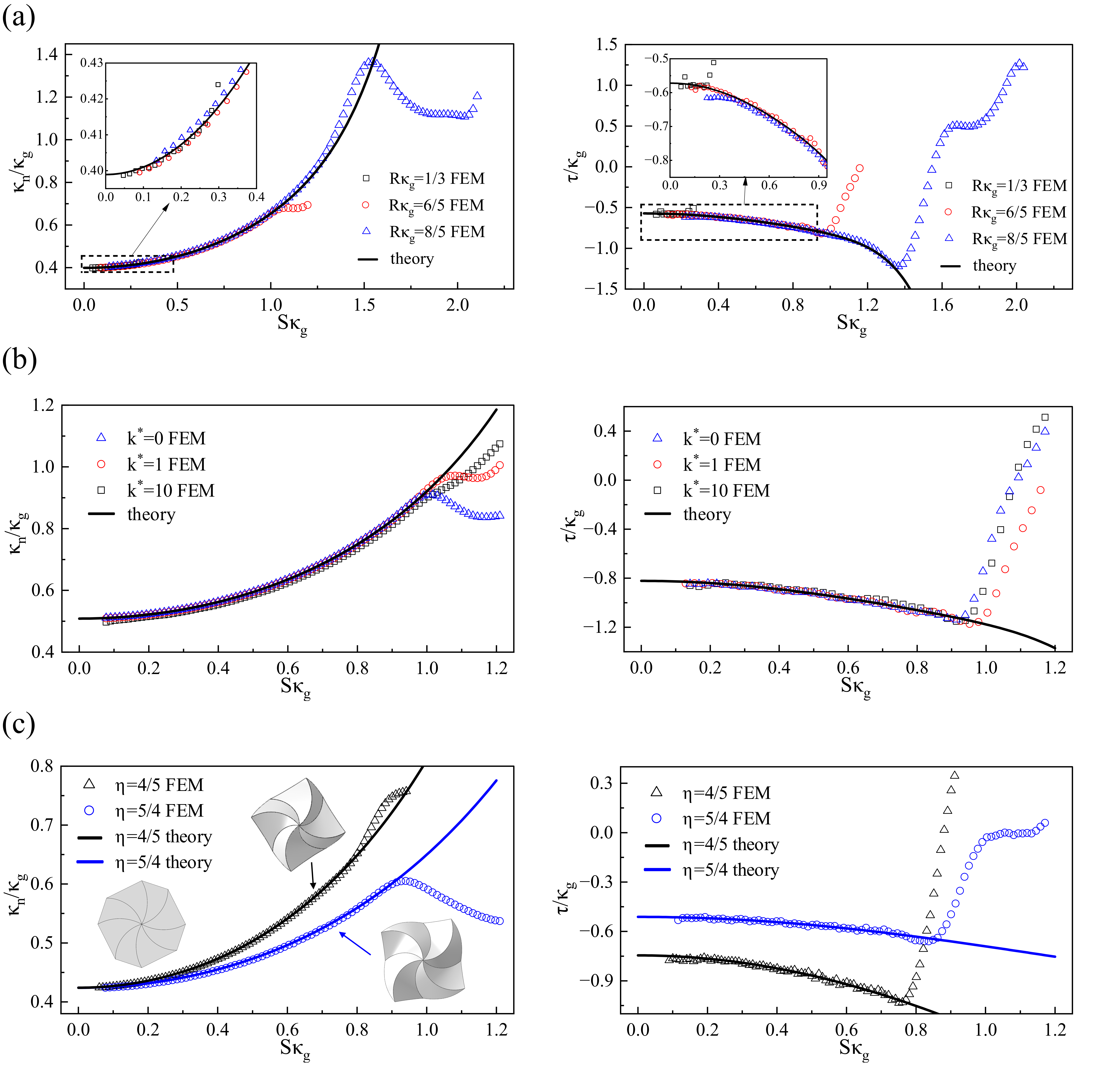}
    \caption{\label{fig:sec52}Comparisons between the near-field theoretical solutions and the numerical results: (a) folds with different $\kappa_g$; (b) folds with different $k^*=k_c/D\kappa_g$; (c) folds with distinct neighboring $\kappa_g$.}
\end{figure}
Here we compare the normal curvature and the torsion distribution of folds derived from theory and FEA. In FEA,
the coordinates of the deformed folds are extracted to compute the discrete curvature and torsion. In the following situations, if crease $2$ is selected as the valley fold, we have $S_3(S_0)<S_0$ derived from the near-field theory. As predicted in Section \ref{sec3.3}, the deformation of the valley folds is completely determined by the near-field theory, which will be examined in the following comparisons. To validate the universality of the near-field theory in Section \ref{sec3.2}, we compare the near-field solutions with numerical results of the folds, where we define $\kappa_n=|\kappa_{n\pm}|,\kappa_g=|\kappa_{g\pm}|$ and the variables $S,\kappa_g,\kappa_n,\tau$ are non-dimensionalized to $S\kappa_g,R\kappa_g,\kappa_n/\kappa_g,\tau/\kappa_g$. The following situations are discussed.

\textbf{Folds with different $\boldsymbol{\kappa_g}$}. We model the arc-fold structures with different geodesic curvatures $R\kappa_g=1/3,6/5,9/5$, and take the rest angles $\phi_0\approx 0.45\pi,0.77\pi,\pi$ such that the deformed configurations have the same vertex folding angles $\phi_1(0)$. Figure \ref{fig:sec52}(a) shows that the theoretical and FEA results reach good agreement in most of the near-field domain.

\textbf{Folds with different $\boldsymbol{k_c}$}. We model arc-fold structures with different non-dimensionalized stiffness $k^*=k_c/D\kappa_g=10,1,0$. For $k^*=10,1$, the folding process is simulated by giving the rest angles $\phi_0\approx 0.55\pi,\pi$, while folding structure with zero $k^*$ is achieved via applying a vertical load $\Delta z\approx 0.63R$ at the vertex. The results in Fig. \ref{fig:sec52}
validate that the deformations are independent of elastic coefficients in a large range once the initial folding angle $\phi_1$ is given.

\textbf{Folds with distinct neighboring $\boldsymbol{\kappa_g}$}. We model the structures with distinct neighboring geodesic curvature $\kappa_{g1},\kappa_{g2}$, where $\kappa_{g2}/\kappa_{g1}=4/5$. Let $\eta$ be the ratio between the geodesic curvatures of the mountain and valley folds. Here, we simulate the cases that have $\eta=\kappa_{g1}/\kappa_{g2} = 4/5$ and  $\eta=\kappa_{g2}/\kappa_{g1}=5/4$, by applying vertical displacements $\Delta z \approx 0.69R, 0.75R$ at vertices. The vertical displacements are chosen to make the initial folding angles identical for the two cases. The results in Fig. \ref{fig:sec52}(c) illustrate that the near-field theory works well in a large range.

The above results validate that the near-field solutions Eq. \eqref{e056} derived from Eq. \eqref{etrans2} reach satisfactory agreement with simulations for various situations. It indicates that the near-field deformation is solely dependent on the initial value $\kappa_n(0)$ and is independent of the geodesic curvatures, elastic coefficients and deforming processes. In contrast, the deformation of multi-curve-fold origami without vertices is governed by the Euler-Lagrange equations, which relies on the elastic coefficients, as shown in Section \ref{sec2.5}. We prove in Section \ref{sec3.2} that, the significant difference owes to the periodicity at the vertex. However, a little unsatisfactory part remains. As analyzed in Section \ref{sec3.3}, the deformation of the valley fold should be completely determined by the near-field theory if the generators distribute continuously, but Fig. \ref{fig:sec52} shows disagreement at the end of the folds. We discuss this observation further in the next section.

\subsubsection{Further discussion on the far-field solutions}
We anticipate that the discontinuities in $\tau$ and $\kappa'_n$ cause the
deviation between FEA and theoretical results in Fig.~\ref{fig:sec52}. As discussed in Section. \ref{sec2.1}, discontinuities occur when $\tau$ and $\kappa_n^{\prime}$ jump, which will form a planar domain. The new generator distribution is shown in Fig. \ref{fig:secfar}(b), which is chosen to satisfy the energy density contour in Fig. \ref{fig:secfar}(a). Such distribution only occurs when the periodic symmetry slightly breaks, since it contradicts with Lemma \ref{lemma2} proved based on a strictly satisfied periodicity.  The good agreement in the near-field region indicates that we may assume the broken symmetry does not affect the low-order (depends on $S_0\kappa_g$) coefficients of the near-field solutions. Therefore, on $(0,S_{21})$ and $(0,S_{3})$, $\kappa$ and $\tau$ are determined by the near-field theory, but the geometry on $(S_3(S_{22}),S_1(S_{21}))$ cannot be determined by the geometry on $(0,S_{21})$ due to the symmetry breaking. Since $t_{2+}$ is continuous at $S_{21}$ and $t_{2-}$ is continuous at $S_{22}$, from Eq. \eqref{e13} the geometric coefficients near $S_{21}$ and $S_{22}$ satisfy 
\begin{equation} \label{jump}
\begin{aligned}
    \tau(S_{21}-)-\frac{\kappa_{n+}^{\prime}(S_{21}-)}{\kappa^2(S_{21})}\kappa_{g+}&=\tau(S_{21}+)-\frac{\kappa_{n+}^{\prime}(S_{21}+)}{\kappa^2(S_{21})}\kappa_{g+},\\
       \tau(S_{22}-)+\frac{\kappa_{n-}^{\prime}(S_{22}-)}{\kappa^2(S_{22})}\kappa_{g-}&=\tau(S_{22}+)+\frac{\kappa_{n-}^{\prime}(S_{22}+)}{\kappa^2(S_{22})}\kappa_{g-}.
\end{aligned}
\end{equation}
\begin{figure}[b!]
    \centering
    \includegraphics[width=1\textwidth]{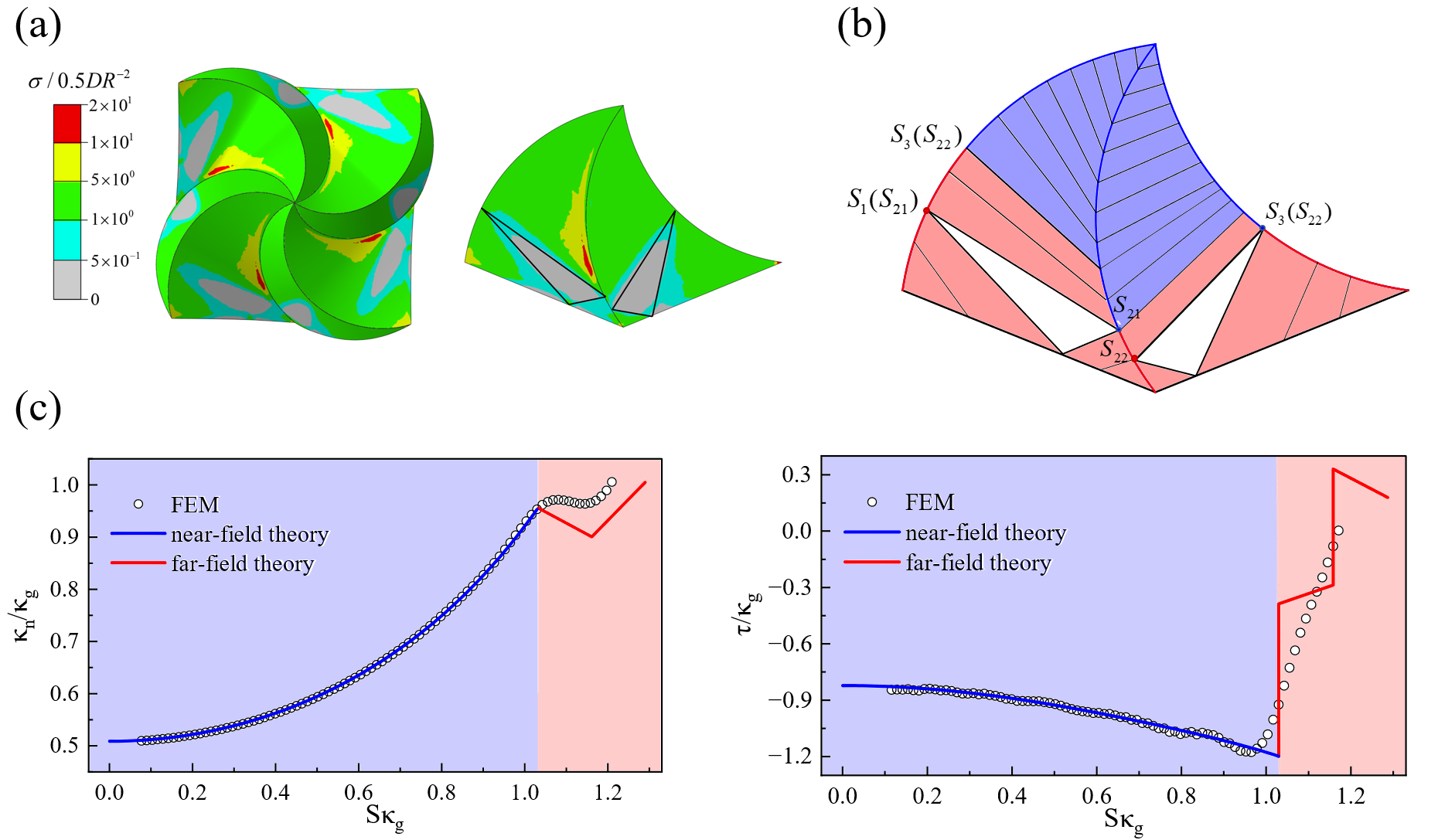}
    \caption{\label{fig:secfar}Further discussion on far-field theories. (a) Energy density contour, where grey triangles represent plane domain; (b) generator distribution after the symmetry breaking; (c) comparison between theoretical solutions and FEA results.}
\end{figure}
From the energy density contour and the results in the previous subsection, we assume that for the case $S_0\kappa_g=1.2$, discontinuities on crease $2$ occur at $S_{21}=0.8S_0$ and $S_{22}=0.9S_0$. $S_3(S_{22})$ and $S_1(S_{21})$ are then derived from the near-field theory, since numerically we observe that changes in high-order terms affect little to both $t_{2-}(S)$ on $(0,S_{21})$ and $t_{3-}(S)$ on $(0,S_3(S_{22}))$. The jumps in $\tau$ and $\kappa_n'$ should minimize the bending and folding energy of the far field, where the bending energy of crease-edge domains is given by Eq. \eqref{e19}, with the projection length of generators $v_{0\pm}$ computed as
\begin{equation}\label{v0}
\begin{aligned}
        v_{0\pm}\kappa_g=\pm 2 \frac{\sin(\frac{S^*_0-S^*}{2})\cos(\frac{S^*\mp\alpha}{2})\sin(\gamma_\pm)}{\cos(\gamma_\pm+S^*+\frac{S^*_0\mp\alpha}{2})}.
\end{aligned}
\end{equation}

To give a first approximate solution for the far-field deformation, we assume that in the far field, $\kappa_n$ of mountain and valley folds are linear in arclength. For $k^* \approx 1$, the bending and folding energies are comparable, therefore we assume that the changes in bending and folding energies caused by 
the jump in $\kappa^{\prime}_n$ cancel each other, and the change in generator distributions, i.e., $t$, dominates the total energy change. 
Specifically, the jumps of $t$ should minimize the zero-order energy given by the multiplication of area and energy density at a single point
\begin{equation}
    E_0=\frac{D}{2}[(1+t_{2+}^2(S_{22}+)\kappa_{n2})^2A_{2+}+(1+t_{2-}^2(S_{21}+)\kappa_{n2})^2A_{2-}+(1+t_{1+}^2(S_1(S_{21})+)\kappa_{n1})^2A_{1+}+(1+t_{3-}^2(S_3(S_{22})+)\kappa_{n3})^2A_{3-}]
\end{equation}
where $A_{i\pm}$ represents the surface area of the crease-edge domain, which is determined by $v_{0\pm}$ in Eq. \eqref{v0} once $t$ is given. In zero-order energy, we assume
\begin{equation}
    \frac{\kappa_{n2}(S_{21})}{\kappa_{n1}(S_1(S_{21}))}=\frac{\kappa_{n2}(S_{22})}{\kappa_{n3}(S_3(S_{22}))}=\frac{\kappa_{n2}(S_{21})}{\kappa_{n1}(S_3(S_{22}))}.
\end{equation}
Thus the ratio can be derived from the near-field theory. Since generators starting from $S_1(S_{21})$ and $S_{21}$ form a triangle, $t_{2-}(S_{21}+)$ couples with $t_{1+}(S_1(S_{21})+)$. Similarly $t_{2+}(S_{22}+)$ couples with $t_{3-}(S_3(S_{22})+)$. We could then derive $\Delta t_{2\pm}$ via separately minimizing the energy $E_0$ on the two panels. From Eq. \eqref{e13} and \eqref{jump}, the jump of $\kappa_n^{\prime}$ satisfies
\begin{equation}
\Delta\kappa_n^{\prime}=\pm\frac{\kappa_n\kappa^2\Delta t_{2\pm}}{2\kappa_g}.
\end{equation}
Therefore, with $\Delta t_{2\pm}$ derived by energy minimization, $\Delta \kappa_n'$ at $S_{21}$ and $S_{22}$ is derived. According to the linear assumption, the far-field curvature is then given approximately. The jumps of $\tau$ are given as well. $\tau(S)$ in $(S_{21},S_{22})$ is given by Eq. \eqref{e13}
\begin{equation}
    \tau=-t_{2+}\kappa_{n+}+\frac{\kappa_{n+}^{\prime}}{\kappa^{2}}\kappa_{g+},
\end{equation}
where $\kappa_{n+}$ is given by the revised far-field theory and $t_{2+}$ is given by the near-field theory. On $(S_{22},S_0)$, $\tau$ is chosen to approximately minimize the energy of the segment.
Fig. \ref{fig:secfar}(c) shows the comparison between the theoretical and numerical results, validating the revised far-field theory. The geometrical and elastic properties of the structure are  $R\kappa_g=6/5, k^*=1, \phi_0=\pi$.

\section{Conclusion} \label{sec:conclusion}
In this work, a comprehensive framework for multi-curve-fold origami is established. This framework theoretically explains how geometry and elasticity determine the deformation of curve-fold origami structures. As an example of multi-curve-fold origami with periodicity, the theory of single-vertex curved origami is established, unveiling a striking vertex-constrained universal equilibrium configuration. Our main contributions are summarized as follows:

(1) We characterize all possible generator distributions and the corresponding energy distributions in single-curve-fold origami with wide panels. We then derive the equilibrium equations that predict how single-curve-fold structures with arbitrary reference tiles deform via energy minimization. Our theory removes the limitations on the panel width and generator distribution.

(2) We derive the geometrical correlations between neighboring creases in multi-curve-fold origami. Based on this, we extend the single-curve-fold origami theory to the multi-curve-fold origami theory. The theory is used to solve the deformation of curved origami with annular creases. Numerical simulations validate the theory.

(3) We derive the single-vertex curved origami theory. We prove that the periodicity at the vertex strongly constrains the configuration space, yielding a universal equilibrium shape at the near-field domain, regardless of the mechanical properties. In contrast, 
the far-field theory, derived by energy minimization, is dependent on the mechanical properties. Numerical simulations are conducted, showing good agreement with theoretical predictions.

 We believe that our generalized multi-curve-fold origami theory, including the vertex-constrained universality, can extend the understanding of the physics of the curved origami system and provide new insights into the shape programming. We anticipate that our work can contribute to the design of complex curved origami structures in the fields of robotics, metamaterials and architectures.
\section*{CRediT authorship contribution statement}
\textbf{Zhixuan Wen:} Conceptualization, Methodology, Software, Validation, Formal analysis, Writing – original draft. \textbf{Pengyu Lv:} Methodology, Writing– review \& editing. \textbf{Fan Feng:} Conceptualization, Investigation, Methodology, Software, Writing– review \& editing. \textbf{Huiling Duan:} Conceptualization, Methodology, Writing – review \& editing, Supervision, Project administration, Funding acquisition. 
\section*{Declaration of competing interest}
The authors declare that they have no known competing financial interests or personal relationships that could have appeared to influence the work reported in this paper.
\section*{Data availability}
No data was used for the research described in the article.
\section*{Acknowledgement}

\appendix
\section{Further discussion on mechanics of single-curve-fold systems with different domains}\label{AppA}

\setcounter{table}{0}   
\setcounter{figure}{0}
\setcounter{equation}{0}

\renewcommand\thetable{A.\arabic{table}}
\renewcommand\thefigure{A.\arabic{figure}}
\renewcommand\theequation{A.\arabic{equation}}

In this appendix, we analyze the equilibrium equations and the boundary conditions for systems with different domains. While the domain classification is usually unknown before solving the equations, we should analyze all possible domain distributions and compare the related elastic energies. The distribution with the minimum energy is the real solution. 

As stated in Section \ref{sec2.2}, the folding processes discussed in this paper will not introduce edge-edge domains. As analyzed in Section \ref{class}, the crease-crease domain only appears in systems with separate panels. We then give illustrations of possible generator distributions in a system that may have the crease-crease domain, the crease-edge domain and the plane domain, as shown in Fig. \ref{fig:A1}. Let the starting point of the possible crease-crease domain $S_1^{*}$ be $S=0$. The value of $t(0)=(\kappa_{g}(0)\kappa_{n}^{\prime}(0)/\kappa^2(0)-\kappa_{n}(0)\kappa_{g}^{\prime}(0)/\kappa^2(0)+\tau(0))/\kappa_n(0)$ determines the type of the generator distribution. In the following sections, we give the total energy variations for both situations and then discuss the equilibrium equations and boundary conditions.

\begin{figure}[t!]
    \centering
    \includegraphics[width=1\textwidth]{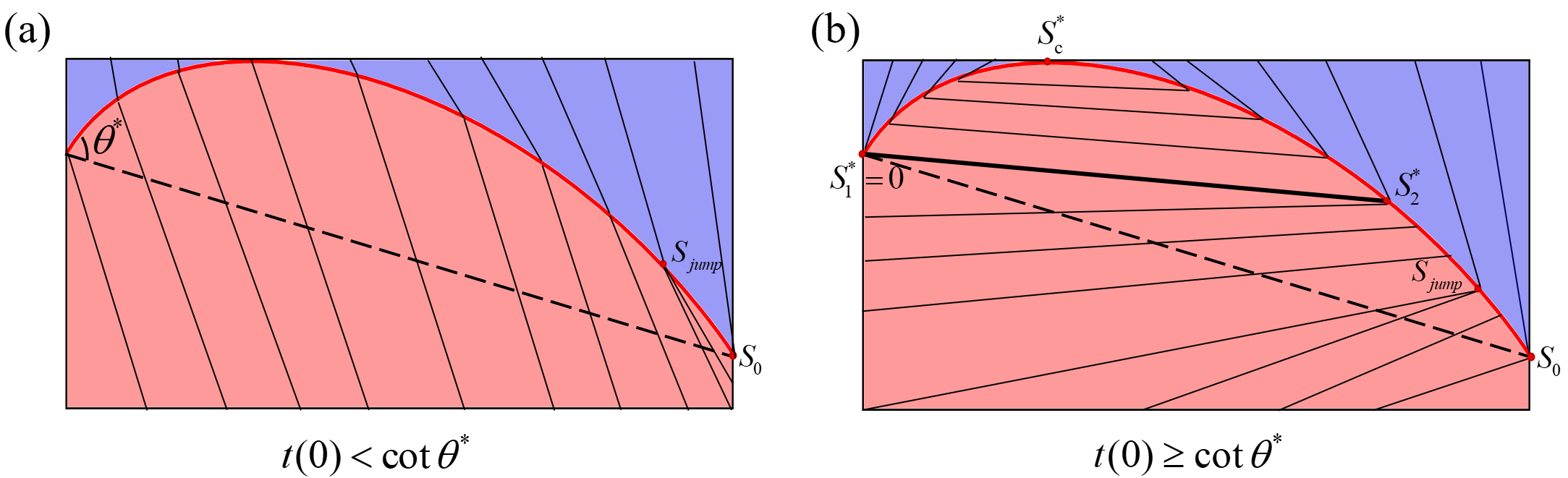}
    \caption{\label{fig:A1}Two possible generator distributions in a single-curve-fold system (panel $+$: red, panel $-$: blue). (a) If $t(0)<\cot{\theta^*}$, the system is covered with crease-edge domain and plane domain. (b) If $t(0)\geq \cot{\theta^*}$, the system is covered with crease-crease domain, crease-edge domain and plane domain. The black thick generator represents the boundary of the crease-crease domain.}
\end{figure}

\subsection{Energy variations}
\boldmath{$t(0)<\cot{\theta^*}$}.
In this situation, the system is covered with crease-edge domains and possible plane domains. Since the plane domain has no elastic energy, the elastic energy of the single-curve-fold system is\unboldmath
\begin{equation}\label{a1}
   E=\int_0^{S_0} \varepsilon(S)\d S,
\end{equation}
where $\varepsilon$ is defined in Eq. \eqref{e23}. Considering possible jumps in geometric variables caused by non-smooth edges and noncontinuous geodesic curvatures at $S_{jump}$, the energy variation in Eq. \eqref{e25} is transformed into
\begin{equation}\label{eA2}
\begin{aligned}
    \delta E&=\int_{0}^{S_0} f\delta \kappa \d S+ \int_{0}^{S_0} g \delta \tau \d S +(\partial_{\kappa^{\prime}} \varepsilon-(\partial_{\kappa^{\prime\prime}} \varepsilon)^{\prime})\delta \kappa \bigg |_{0}^{S_0} +\partial_{\kappa^{\prime\prime}} \varepsilon \delta \kappa^{\prime}  \bigg |_{0}^{S_0} + \partial_{\tau^{\prime}}\varepsilon\delta \tau  \bigg |_{0}^{S_0}+jump, 
\end{aligned}
\end{equation}
where $f,g$ are defined in Eq. \eqref{e26} and $jump$ is defined as
\begin{equation}
jump = (\partial_{\kappa^{\prime}} \varepsilon-(\partial_{\kappa^{\prime\prime}} \varepsilon)^{\prime})\delta \kappa \bigg |_{S_{jump}-}^{S_{jump}+} +\partial_{\kappa^{\prime\prime}} \varepsilon \delta \kappa^{\prime}  \bigg |_{S_{jump}-}^{S_{jump}+}  + \partial_{\tau^{\prime}}\varepsilon\delta \tau  \bigg |_{S_{jump}-}^{S_{jump}+}.
\end{equation}

\boldmath{$t(0)\geq \cot{\theta^*}$}.
\unboldmath
In this situation, the system is covered with crease-crease domains, crease-edge domains and possible plane domains. the elastic energy of the system is
\begin{equation}\label{a4}
   E=\int_0^{S_c^{*}} \varepsilon^*(S)\d S + \int_{S_2^{*}}^{S_0} \varepsilon(S)\d S,
\end{equation}
where $\varepsilon$ is defined in Eq. \eqref{e23} and $\varepsilon^*$ is defined in Eq. \eqref{evarestar}. While $S_2$ is a moving boundary, the energy variation is
\begin{equation}\label{a5}
\begin{aligned}
\delta E&=\int_{0}^{S_c^*} f^*\delta \kappa \d S+ \int_{0}^{S_c^*} g^* \delta \tau \d S 
 + \int_{S_2^*}^{S_0} f\delta \kappa \d S+ \int_{S_2^*}^{S_0} g\delta \tau \d S 
 \\&+(\partial_{\kappa^{\prime}} \varepsilon^*-(\partial_{\kappa^{\prime\prime}} \varepsilon^*)^{\prime})\delta \kappa \bigg |_{0}^{S_c^*} +\partial_{\kappa^{\prime\prime}} \varepsilon^* \delta \kappa^{\prime}  \bigg |_{0}^{S_c^*} + \partial_{\tau^{\prime}}\varepsilon^*\delta \tau  \bigg |_{0}^{S_c^*}
 \\&+(\partial_{\kappa^{\prime}} \varepsilon-(\partial_{\kappa^{\prime\prime}} \varepsilon)^{\prime})\delta \kappa \bigg |_{S_2^*+}^{S_0} +\partial_{\kappa^{\prime\prime}} \varepsilon \delta \kappa^{\prime}  \bigg |_{S_2^*+}^{S_0} + \partial_{\tau^{\prime}}\varepsilon\delta \tau  \bigg |_{S_2^*+}^{S_0}-\varepsilon(S_2^*+) \delta S_2^*+jump, 
\end{aligned}
\end{equation}
where $f,g$ are defined in Eq. \eqref{e26} and $f^*,g^*$ are defined in Eq. \eqref{eccd}. 

As analyzed in Section \ref{class}, $\kappa_n(S_c^*)=0,\kappa_n^{\prime}(S_c^*)=0$ and $\tau(S_c^*)=0$, leading to $\delta \kappa=0,\delta\kappa^{\prime}=0$ and $\delta\tau=0$ at $S_c^*$. Since a generator connects points $S=0$ and $S=S_2^*$, geometric correlations exist that $S_2^*$ is the function of $\kappa(0),\kappa^{\prime}(0), \tau(0)$ and $\kappa(S_2^*-)$ is the function of $\kappa(0),\kappa^{\prime}(0),\kappa^{\prime\prime}(0),\tau(0),\tau^{\prime}(0)$. To avoid energy singularity, $\kappa(S_2^*+)$ couples with $\kappa(S_2^*-)$ to keep the continuity of $\kappa/\kappa_g$. $\kappa(S_2^*+)$ is thus expressed as
\begin{equation}\label{A6}
\begin{aligned}
        &S_2^*=S_2^*(\kappa(0),\kappa^{\prime}(0),\tau(0)),\\
        &\kappa(S_2^*+)=\kappa(\kappa(0),\kappa^{\prime}(0),\kappa^{\prime\prime}(0),\tau(0),\tau^{\prime}(0)),
\end{aligned}
\end{equation}
leading to the variation
\begin{equation}
\begin{aligned}
    &\delta S_2^*=\frac{\partial S_2^*}{\partial \kappa(0)}\delta \kappa(0)+\frac{\partial S_2^*}{\partial \kappa^{\prime}(0)}\delta \kappa^{\prime}(0)+\frac{\partial S_2^*}{\partial \tau(0)}\delta \tau(0),\\
    &\delta\kappa(S_2^*+)=\frac{\partial \kappa(S_2^*+)}{\partial \kappa(0)}\delta \kappa(0)+\frac{\partial \kappa(S_2^*+)}{\partial \kappa^{\prime}(0)}\delta \kappa^{\prime}(0)+\frac{\partial \kappa(S_2^*+)}{\partial \kappa^{\prime\prime}(0)}\delta \kappa^{\prime\prime}(0)+\frac{\partial \kappa(S_2^*+)}{\partial \tau(0)}\delta \tau(0)+\frac{\partial \kappa(S_2^*+)}{\partial \tau^{\prime}(0)}\delta \tau^{\prime}(0).
\end{aligned}
\end{equation}
Since the plane domain may occur at $S=S_2^*$ leading to discontinuous $\tau/\kappa_g$ and $\kappa^{\prime}/\kappa_g$, variations $\delta \kappa^{\prime}(S_2^*+)$ and $\delta \tau(S_2^*+)$ are free and irrelevant with $\delta \kappa^{\prime}(S_2^*-)$ and $\delta \tau(S_2^*-)$. The energy variation in Eq. \eqref{a5} is then converted into
\begin{equation} \label{A8}
    \begin{aligned}
        \delta E&=\int_{0}^{S_c^*} f^*\delta \kappa \d S+ \int_{0}^{S_c^*} g^* \delta \tau \d S 
 + \int_{S_2^*}^{S_0} f\delta \kappa \d S+ \int_{S_2^*}^{S_0} g\delta \tau \d S 
 \\&-[(\partial_{\kappa^{\prime}} \varepsilon^*(0)-(\partial_{\kappa^{\prime\prime}} \varepsilon^*(0))^{\prime})+(\partial_{\kappa^{\prime}} \varepsilon(S_2^*+)-(\partial_{\kappa^{\prime\prime}} \varepsilon(S_2^*+))^{\prime})\frac{\partial \kappa(S_2^*+)}{\partial \kappa(0)}-\varepsilon(S_2^*+)\frac{\partial S_2^*}{\partial \kappa(0)}] \delta \kappa(0)\\
 &-[\partial_{\kappa^{\prime\prime}} \varepsilon^*(0)+(\partial_{\kappa^{\prime}} \varepsilon(S_2^*+)-(\partial_{\kappa^{\prime\prime}} \varepsilon(S_2^*+))^{\prime})\frac{\partial \kappa(S_2^*+)}{\partial \kappa^{\prime}(0)}-\varepsilon(S_2^*+)\frac{\partial S_2^*}{\partial \kappa^{\prime}(0)}]\delta\kappa^{\prime}(0)
-[(\partial_{\kappa^{\prime}} \varepsilon(S_2^*+)-(\partial_{\kappa^{\prime\prime}} \varepsilon(S_2^*+))^{\prime})\frac{\partial \kappa(S_2^*+)}{\partial \kappa^{\prime\prime}(0)}]\delta\kappa^{\prime\prime}(0)\\
&-[\partial_{\tau^{\prime}}\varepsilon^*(0)+(\partial_{\kappa^{\prime}} \varepsilon(S^*_2+)-(\partial_{\kappa^{\prime\prime}} \varepsilon(S^*_2+))\frac{\partial \kappa(S_2^*+)}{\partial \tau(0)}-\varepsilon(S_2^*+)\frac{\partial S_2^*}{\partial \tau(0)}]\delta \tau(0)
-[(\partial_{\kappa^{\prime}} \varepsilon(S_2^*+)-(\partial_{\kappa^{\prime\prime}} \varepsilon(S_2^*+))^{\prime})\frac{\partial \kappa(S_2^*+)}{\partial \tau^{\prime}(0)}]\delta \tau^{\prime}(0)\\
&+(\partial_{\kappa^{\prime}} \varepsilon(S_0)-(\partial_{\kappa^{\prime\prime}} \varepsilon(S_0))^{\prime})\delta \kappa(S_0) +\partial_{\kappa^{\prime\prime}} \varepsilon \delta \kappa^{\prime}  \bigg |_{S_2^*+}^{S_0} + \partial_{\tau^{\prime}}\varepsilon\delta \tau  \bigg |_{S_2^*+}^{S_0}+jump.
    \end{aligned}
\end{equation}
\subsection{Equilibrium equations and boundary conditions}\label{A2}
From Eq. \eqref{a5} and \eqref{A8}, we derive equilibrium equations and boundary conditions for both situations based on energy minimization $\delta E=0$.

\boldmath{$t(0)<\cot{\theta^*}$}. 
\unboldmath
For freely-deformed systems, $\delta \kappa$ and $\delta \tau$ are arbitrary. $\delta E=0$ thus leads to $f=0$ and $g=0$ satisfied along the crease $(0,S_0)$, and the boundary conditions $\partial_{\kappa^{\prime}} \varepsilon-(\partial_{\kappa^{\prime\prime}} \varepsilon)^{\prime}=0$, $\partial_{\kappa^{\prime\prime}} \varepsilon=0$, $\partial_{\tau^{\prime}}\varepsilon=0$ on both ends $S=0, S=S_0$.

$jump=0$ introduces extra jump conditions at $S=S_{jump}$. Due to the continuity of $\kappa/\kappa_g$, $\delta\kappa(S_{jump}+)$ couples with $\delta\kappa(S_{jump}-)$, following $\delta\kappa(S_{jump}+)/\delta\kappa(S_{jump}-)=\kappa_g(S_{jump}+)/\kappa_g(S_{jump}-)$. In the contrast, $\delta\kappa^{\prime}(S_{jump}+)$ and $\delta \tau(S_{jump}+)$ are irrelevant with $\delta\kappa^{\prime}(S_{jump}-)$ and $\delta \tau(S_{jump}-)$. $jump=0$ thus leads to
\begin{equation} \label{Aejump}
    \begin{aligned}
        &[\partial_{\kappa^{\prime}} \varepsilon(S_{jump}-)-(\partial_{\kappa^{\prime\prime}} \varepsilon(S_{jump}-))^{\prime}]\kappa_g(S_{jump}-)
        -[\partial_{\kappa^{\prime}} \varepsilon(S_{jump}+)-(\partial_{\kappa^{\prime\prime}} \varepsilon(S_{jump}+))^{\prime}]\kappa_g(S_{jump}+)=0,\\&\partial_{\kappa^{\prime\prime}}\varepsilon(S_{jump}+)=\partial_{\kappa^{\prime\prime}}\varepsilon(S_{jump}-)=0, \\
        &\partial_{\tau^{\prime}}\varepsilon(S_{jump}+)=\partial_{\tau^{\prime}}\varepsilon(S_{jump}-)=0. 
    \end{aligned}
\end{equation}
Satisfying the jump conditions in Eq. \eqref{Aejump} may lead to jumps in $\tau$ and $\kappa^{\prime}$ at $S=S_{jumps}$, forming a plane domain.

\boldmath{$t(0)\geq\cot{\theta^*}$}. 
\unboldmath
For freely-deformed systems, $\delta \kappa$ and $\delta \tau$ are arbitrary. $\delta E=0$ thus leads to $f^*=0,g^*=0$ satisfied along $(0,S_c^*)$ and $f=0,g=0$ satisfied along $(S_2^*,S_0)$. The boundary conditions are
\begin{equation}
\begin{aligned}
 &(\partial_{\kappa^{\prime}} \varepsilon^*(0)-(\partial_{\kappa^{\prime\prime}} \varepsilon^*(0))^{\prime})+(\partial_{\kappa^{\prime}} \varepsilon(S_2^*+)-(\partial_{\kappa^{\prime\prime}} \varepsilon(S_2^*+))^{\prime})\frac{\partial \kappa(S_2^*+)}{\partial \kappa(0)}-\varepsilon(S_2^*+)\frac{\partial S_2^*}{\partial \kappa(0)}=0\\
 &\partial_{\kappa^{\prime\prime}} \varepsilon^*(0)+(\partial_{\kappa^{\prime}} \varepsilon(S_2^*+)-(\partial_{\kappa^{\prime\prime}} \varepsilon(S_2^*+))^{\prime})\frac{\partial \kappa(S_2^*+)}{\partial \kappa^{\prime}(0)}-\varepsilon(S_2^*+)\frac{\partial S_2^*}{\partial \kappa^{\prime}(0)}]\delta\kappa^{\prime}(0)=0\\
&(\partial_{\kappa^{\prime}} \varepsilon(S_2^*+)-(\partial_{\kappa^{\prime\prime}} \varepsilon(S_2^*+))^{\prime})\frac{\partial \kappa(S_2^*+)}{\partial \kappa^{\prime\prime}(0)}=(\partial_{\kappa^{\prime}} \varepsilon(S_2^*+)-(\partial_{\kappa^{\prime\prime}} \varepsilon(S_2^*+))^{\prime})\frac{\partial \kappa(S_2^*+)}{\partial \tau^{\prime}(0)}=0\\
&\partial_{\tau^{\prime}}\varepsilon^*(0)+(\partial_{\kappa^{\prime}} \varepsilon(S^*_2+)-(\partial_{\kappa^{\prime\prime}} \varepsilon(S^*_2+))\frac{\partial \kappa(S_2^*+)}{\partial \tau(0)}-\varepsilon(S_2^*+)\frac{\partial S_2^*}{\partial \tau(0)}=0
\\
&\partial_{\kappa^{\prime}} \varepsilon(S_0)-(\partial_{\kappa^{\prime\prime}} \varepsilon(S_0))^{\prime}=0\\
&\partial_{\kappa^{\prime\prime}} \varepsilon \delta \kappa^{\prime}  \bigg |_{S_2^*+}^{S_0} = \partial_{\tau^{\prime}}\varepsilon\delta \tau  \bigg |_{S_2^*+}^{S_0}=0       
\end{aligned}
\end{equation}
Besides, the distributions of $\kappa,\tau$ should satisfy jump conditions in Eq. \eqref{Aejump} at $S=S_{jump}$ and geometric constraints in Eq. \eqref{A6}.
If the solutions exist for both situations, the system chooses the distribution of $\kappa$ and $\tau$ with the less elastic energy expressed in Eq. \eqref{a1} and \eqref{a4}. 

Besides the distributions illustrated in Fig. \ref{fig:A1}, there are some special cases shown in Fig. \ref{fig:A2}. If a plane domain shown in Fig. \ref{fig:A2}(a) is formed, the crease-crease domain may not start from $S=0$ or $S=S_0$. Thus, the elastic energy is
\begin{equation} \label{a11}
    E=\int_0^{S_1^{*}} \varepsilon(S)\d S + \int_{S_1^{*}}^{S_c^*} \varepsilon^*(S)\d S+\int_{S_2^{*}}^{S_0} \varepsilon(S)\d S,
\end{equation}
where $S_1^*$ and $S_2^*$ are moving boundaries. If $\kappa_g$ changes its sign along the fold, the system may have more crease-crease domains, as shown in Fig. \ref{fig:A2}(b). The elastic energy is
\begin{equation}\label{a12}
    E=\int_0^{S_c^{*}} \varepsilon^*(S)\d S + \int_{S_2^{*}}^{S_1^{**}} \varepsilon(S)\d S+\int_{S_1^{**}}^{S_c^{**}} \varepsilon^*(S)\d S,
\end{equation}
where $S_2^*$ and $S_1^{**}$ are moving boundaries. From \eqref{a11} and \eqref{a12}, we can derive the equilibrium equations and boundary conditions for both cases, following the variational method given in this appendix. 
\begin{figure}[t!]
    \centering
    \includegraphics[width=1\textwidth]{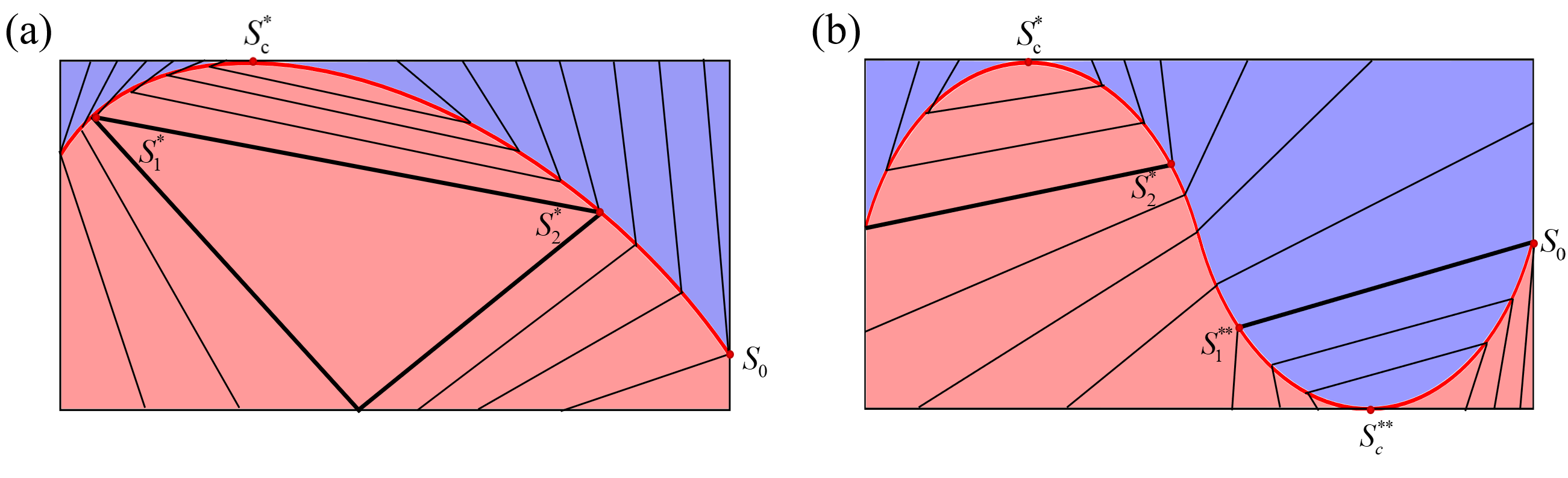}
    \caption{\label{fig:A2}Two special cases of generator distributions in a single-curve-fold system (panel $+$: red, panel $-$: blue). (a) System with a plane domain formed between $(S_1^*,S_2^*)$. Thick generators represent the boundaries of the plane domain. (b) System with two crease-crease domains. $\kappa_g$ changes its sign along the fold. Thick generators represent boundaries of the crease-crease domains.}
\end{figure}

For systems with geometric constraints, the equilibrium equations and boundary conditions can be obtained by converting Eq. \eqref{e29} into Eq. \eqref{a5} and \eqref{A8}. Following Eq. \eqref{e31} - \eqref{e37}, we can derive equilibrium equations satisfied in different domains (which we have studied in Section \ref{sec2333}) and the boundary conditions. 
\bibliographystyle{elsarticle-harv} 
\bibliography{cas-refs}





\end{document}